\documentclass[aip,jmp,12pt,superscriptaddress]{revtex4-1}
\usepackage{amsfonts}
\usepackage{graphicx}
\usepackage{amsmath}
\usepackage{amsthm}
\usepackage{bbm}
\usepackage{hyperref}

\newtheorem{thm}{Theorem}[section]
\newtheorem{cor}{Corollary}
\newtheorem{lem}{Lemma}
\newtheorem{prop}{Proposition}
\theoremstyle{definition}

\theoremstyle{remark}
\newtheorem{rem}{Remark}

\begin{document}

\title{Normal form decomposition for Gaussian-to-Gaussian superoperators}
\author{Giacomo De Palma}
\affiliation{NEST, Scuola Normale Superiore and Istituto Nanoscienze-CNR, I-56126 Pisa, Italy}
\affiliation{INFN, Pisa, Italy}

\author{Andrea Mari}
\affiliation{NEST, Scuola Normale Superiore and Istituto Nanoscienze-CNR, I-56126 Pisa, Italy}

\author{Vittorio Giovannetti}
\affiliation{NEST, Scuola Normale Superiore and Istituto Nanoscienze-CNR, I-56126 Pisa, Italy}

\author{Alexander S. Holevo}
\affiliation{Steklov Mathematical Institute, 119991 Moscow, Russia and National Research University Higher School of Economics (HSE),
101000 Moscow, Russia}

\begin{abstract}
In this paper we explore the set of linear maps sending the set of quantum Gaussian states into itself. These maps are in general not positive, a feature which can be exploited as a test to check whether a given quantum state belongs to the convex hull of Gaussian states (if one of the considered maps sends it into a non positive operator, the above state is certified not to belong to the set). Generalizing a result known to be valid under the assumption of complete positivity, we provide a characterization of these Gaussian-to-Gaussian (not necessarily positive) superoperators in terms of their action on the characteristic function of the inputs. For the special case of one-mode mappings we also show that any Gaussian-to-Gaussian superoperator can be expressed as a concatenation of a phase-space dilatation, followed by the action of a completely positive Gaussian channel, possibly composed with a transposition. While a similar decomposition is shown to fail in the multi-mode scenario, we prove that it still holds at least under the further hypothesis of homogeneous action on the covariance matrix.
\end{abstract}

\maketitle

\section{Introduction}

Gaussian Bosonic States (GBSs) play a fundamental role in the study of
continuous-variable (CV) quantum information processing~\cite
{HOLEVOb1,PARISBOOK,REV,REV1} with applications in quantum cryptography,
quantum computation and quantum communication where they are known to
provide optimal ensembles for a large class of quantum communication lines
(specifically the phase-invariant Gaussian Bosonic maps)~\cite
{GAUSC,GAUS0,GAUS00,WOLF1,GAUS1,GAUS2}. GBSs are characterized by the
property of having Gaussian Wigner quasi-distribution and describe Gibbs
states of Hamiltonians which are quadratic in the field operators of the
system. Further, in quantum optics they include coherent, thermal and
squeezed states of light and can be easily created via linear amplification
and loss.

Directly related to the definition of GBSs is the notion of Gaussian
transformations~\cite{HOLEVOb1,REV,REV1}, i.e. superoperators mapping the
set $\mathfrak{G}$ of GBSs into itself. In the last two decades, a great
deal of attention has been devoted to characterizing these objects. In
particular the community focused on Gaussian Bosonic Channels (GBCs)~\cite
{GAUSC}, i.e. Gaussian transformations which are completely positive (CP)
and provide hence the proper mathematical representation of data-processing
and quantum communication procedures which are physically implemetable~\cite
{CAVES}. On the contrary, less attention has been devoted to the study of
Gaussian superoperators which are not CP or even non-positive. A typical
example of such mappings is provided by the phase-space dilatation, which,
given the Wigner quasi-distribution $W_{\hat{\rho}}(\mathbf{r})$ of a
state $\hat{\rho}$ of $n$ Bosonic modes, yields the function $
W^{(\lambda)}_{\hat{\rho}}( \mathbf{r}) \equiv W_{\hat{\rho}}(\mathbf{r}
/\lambda)/\lambda^{2n}$ as an output, with the real parameter $\lambda$
satisfying the condition $|\lambda|>1$.
On one hand, when acting on $\mathfrak{G}$ the mapping
\begin{eqnarray}
W_{\hat{\rho}}(\mathbf{r})\mapsto W^{(\lambda)}_{\hat{\rho}}( \mathbf{r}) \;,
\label{MAP}
\end{eqnarray}
always outputs proper (Gaussian) states. Specifically, given $\hat{\rho}\in
\mathfrak{G}$ one can identify another Gaussian density operator $\hat{\rho}
^{\prime}$ which admits the function $W^{(\lambda)}_{\hat{\rho}}( \mathbf{r}
) $ as Wigner distribution, i.e. $W_{\hat{\rho}^{\prime}}(\mathbf{r})=
W^{(\lambda)}_{\hat{\rho}}( \mathbf{r})$. On the other hand, there exist
inputs $\hat{\rho}$ for which $W^{(\lambda)}_{\hat{\rho}}( \mathbf{r})$ is
no longer interpretable as the Wigner quasi-distribution of \textit{any}
quantum state: in this case in fact $W^{(\lambda)}_{\hat{\rho}}( \mathbf{r})$
is the Wigner quasi-distribution $W_{\hat{\theta}}(\mathbf{r})$
of an operator $\hat{\theta}$ which is not positive \cite{WERNER} (for example, any pure non-Gaussian state has this property for any $\lambda\neq\pm1$ \cite{dias}).
Accordingly
phase-space dilatations~(\ref{MAP}) should be considered as ``unphysical''
transformations, i.e. mappings which do not admit implementations in the
laboratory. Still dilatations and similar exotic Gaussian-to-Gaussian
mappings turn out to be useful mathematical tools that can be employed to
characterize the set of states of CV systems in a way which is not
dissimilar to what happens for positive (but not completely positive)
transformations in the analysis of entanglement~\cite{HORO}. In particular
Br\"ocker and Werner~\cite{WERNER} used~(\ref{MAP}) to study the convex hull
$\mathfrak{C}$ of Gaussian states (i.e. the set of density operators $\hat{
\rho}$ which can be expressed as a convex combination of elements of $
\mathfrak{G}$). The rationale of this analysis is that the set $\mathfrak{F}$
of density operators which are mapped into proper output states by this
transformation includes $\mathfrak{C}$ as a proper subset, see Fig.~\ref
{FigSys}. Accordingly if a certain input $\hat{\rho}$ yields a $
W^{(\lambda)}_{\hat{\rho}}( \mathbf{r})$ which is not the Wigner
distribution of a state, we can conclude that $\hat{\rho}$ is not an element
of $\mathfrak{C}$.
Finding mathematical and experimental criteria which help
in identifying the boundaries of $\mathfrak{C}$ is indeed a timely and
important issue which is ultimately related with the characterization of
non-classical behavior in CV systems, see e.g. Refs.~\cite
{RADIM1,DET1,DET2,DET3,mari,kiesel,richter,jezek,NONGAUS1,NONGAUS2,NONGAUS3}, and also \cite{ferm1,ferm2} for the fermionic case.

In this context a classification of non-positive Gaussian-to-Gaussian
operations is mandatory.
This analysis has been initiated in \cite{cirac}, where Gaussian-to-Gaussian maps are characterized through their
Choi-Jamio\l kowski state, under the hypothesis that this state has a Gaussian characteristic function.
One goal of this paper is proving this hypothesis: we prove
that the action of such
transformations on the covariance matrix and on the first moment must be
linear, and we write explicitly the transformation properties of the
characteristic function (Theorem \ref{gaussthmq}).
In the classical case, any probability measure can be written as a convex
superposition of Dirac deltas, so the convex hull of the Gaussian measures
coincides with the whole set of measures. A simple consequence of this fact is that a linear transformation
sending Gaussian measures into Gaussian (and then positive) measures is
always positive. Nothing of this holds in the more interesting quantum case,
so we focus on it, and use Theorem \ref{gaussthmq} to get a decomposition
which, for single-mode operations, shows that any linear quantum
Gaussian-to-Gaussian transformation can always be
decomposed as a proper combination of a dilatation~(\ref{MAP}) followed by a
CP Gaussian mapping plus possibly a transposition. We also show that our
decomposition theorem applies to the multi-mode case, as long as we restrict
the analysis to Gaussian transformations which are homogeneous at the level
of covariance matrix. For completeness we finally discuss the case of
contractions: these are mappings of the form~(\ref{MAP}) with $|\lambda|<1$.
They are not proper Gaussian transformations because they map some Gaussian
states into non-positive operators. Still some of the results which apply to
the dilatations can be extended to this set.

\begin{figure}[t]
\begin{center}
\includegraphics[trim=0pt 0pt 0pt 0pt, clip, width=0.7\textwidth]{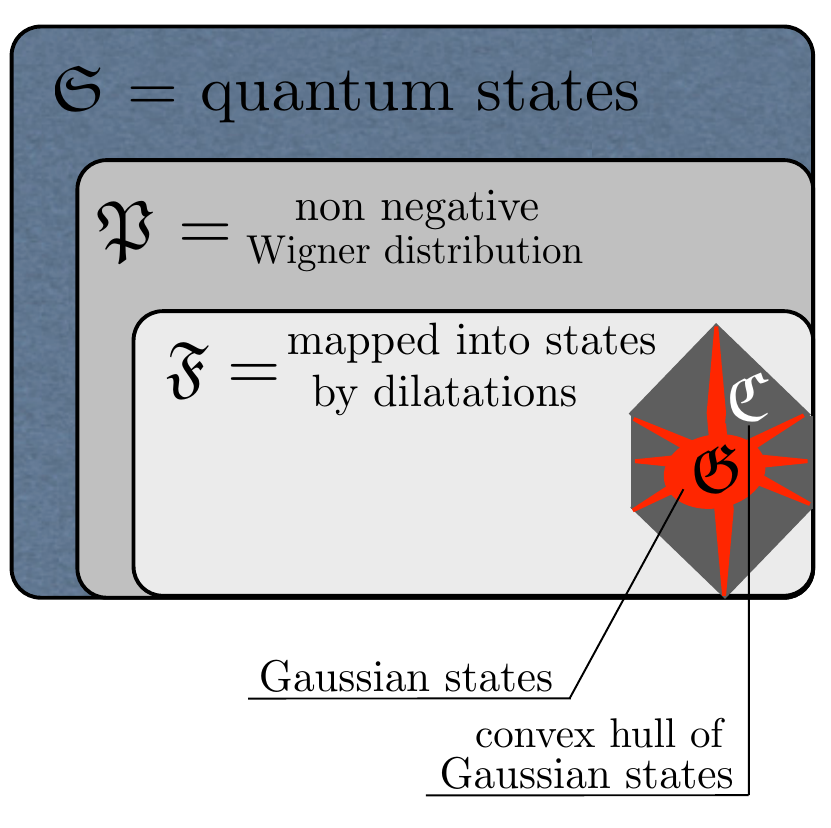}
\end{center}
\caption{(Color online) Pictorial representation of the structure of the set
of states $\mathfrak{S}$ of a CV system. $\mathfrak{P}$ is the subset of
density operators $\hat{\protect\rho}$ which have non-negative Wigner
distribution~(\protect\ref{WIGWIG}). $\mathfrak{F}$ is set of states which instead are mapped into proper density operators by an arbitrary
dilatation~(\protect\ref{MAP}).
$\mathfrak{G}$ is the set of Gaussian states
and $\mathfrak{C}$ its convex hull.
$\mathfrak{S}$, $\mathfrak{P}$, $
\mathfrak{F}$, and $\mathfrak{C}$ are closed under convex convolution, $
\mathfrak{G}$ is not. For a detailed study of the relations among these sets
see Ref.~\protect\cite{WERNER}. }
\label{FigSys}
\end{figure}

The manuscript is organized as follows. We start in Sec.~\ref{S:NOTATION} by
introducing the notation and recalling some basic facts about characteristic functions, Wigner
distributions and Gaussian states. In Sec.~\ref{S:PROBLEM}
we state the problem and prove the Theorem  \ref{gaussthmq} characterizing the action of Gaussian-to-Gaussian
superoperators on the characteristic functions of quantum states and its variations, including the probabilistic analog. In subsection \ref{contr} we consider the case of contractions. In Sec.~\ref{S:ONE} we present the main result of the
manuscript, i.e. the decomposition theorem for single-mode
Gaussian-to-Gaussian transformations. The multi-mode case is then analyzed
in Sec.~\ref{S:MULTI}. The paper ends hence with Sec.~\ref{S:CON} where we
present a brief summary and discuss some possible future developments.
In Appendix \ref{app} we prove the unboundedness of phase-space dilatations with respect to the trace norm.

\section{Notation}

\label{S:NOTATION} In this section we introduce the notation and some basic
definitions which are useful to set the problem.

\subsection{Symplectic structure}

 Consider a CV system constituted by $n$ independent modes~\cite
{HOLEVOb1,PARISBOOK} with quadratures satisfying the canonical commutation
relations
\begin{equation}
\left[ \hat{Q}^{j},\;\hat{P}^{k}\right] =i\delta ^{jk}\;\hat{\mathbbm{1}}
\;,\qquad j,\,k=1,\ldots ,n\;.  \label{[QP]}
\end{equation}
Organizing these operators in the column vector $\mathbf{\hat{R}}$ of
elements
\begin{equation}
\mathbf{\hat{R}}\equiv \left( \hat{Q}^{1},\,\hat{P}^{1},\ldots ,\hat{Q}
^{n},\,\hat{P}^{n}\right) ^{T}\;,
\end{equation}
equation~(\ref{[QP]}) can be cast in the equivalent form
\begin{equation}
\left[\hat{R}^i,\;\hat{R}^j\right] =i\Delta^{ij} \;\hat{
\mathbbm{1}}\;,  \label{[RR]}
\end{equation}
where $\Delta $ is the skew-symmetric matrix
\begin{equation}
\Delta =\bigoplus_{i=1}^{n}\left(
\begin{array}{cc}
0 & 1 \\
-1 & 0
\end{array}
\right) \;,  \label{deltacan}
\end{equation}
which defines the symplectic form of the problem. Accordingly a real matrix $
S$ is said to be symplectic if it preserves $\Delta $, i.e. if the following
identity holds
\begin{equation}
S\Delta S^{T}=\Delta \;.  \label{symplectic}
\end{equation}

\subsection{Characteristic and Wigner functions}

For any density operator $\hat{\rho}\in \mathfrak{S}$ we define its
symmetrically ordered characteristic function as
\begin{equation}
\chi (\mathbf{k})\equiv \mathrm{Tr}\left(\hat{\rho}\;\exp \left( i\,\mathbf{k}
^{T}\,\mathbf{\hat{R}}\right)\right)\;,\qquad\mathbf{k}\in \mathbbm{R}^{2n}\;.  \label{char}
\end{equation}
This formula makes sense for any trace-class operator $\hat{\rho},$ the
associated $\chi (\mathbf{k})$ being a bounded continuous function. By using
the Parceval-type formula \cite{HOLEVOb1}
\begin{equation}
\mathrm{Tr}\left(\hat{\rho}_{1}^{\dag}\;\hat{\rho}_{2}\right)=\int \overline{\chi _{
\hat{\rho}_{1}}(\mathbf{k})}\,\chi _{\hat{\rho}_{2}}(\mathbf{k})\;\;\frac{d
\mathbf{k}}{(2\pi )^{n}}\;,  \label{HS}
\end{equation}
one can extend the correspondence (\ref{char}) to Hilbert-Schmidt operators,
the associated $\chi (\mathbf{k})$ being a square-integrable function. The
correspondence is the isomorphism between the Hilbert space $\mathfrak{H}$
of Hilbert-Schmidt operators with the inner product given by the left hand side of
(\ref{HS}), and that of square-integrable functions of $\mathbf{k}$. The operator
$\hat{\rho}$ can be expressed as
\begin{equation}
\hat{\rho}=\int \chi _{\hat{\rho}}(\mathbf{k})\;e^{-i\mathbf{k}^{T}\mathbf{
\hat{R}}}\;\frac{d\mathbf{k}}{(2\pi )^{n}}\;.
\end{equation}
We also define its Wigner function as the Fourier transform of the characteristic function, square integrable for any $\hat{\rho}\in\mathfrak{H}$:
\begin{equation}
W_{\hat{\rho}}(\mathbf{r}) = \int \chi _{\hat{\rho}}(\mathbf{k})\;e^{-i
\mathbf{k}^{T}\mathbf{r}}\;\frac{d\mathbf{k}}{(2\pi )^{2n}}\;.  \label{WIGWIG}
\end{equation}

\subsection{Heisenberg uncertainty}

The covariance matrix of $\hat{\rho}$ is defined as
\begin{equation}
\sigma^{ij} \equiv \mathrm{Tr}\left( \hat{\rho}\;\left\{ \hat{R}^i
-\left\langle \hat{R}^i\right\rangle ,\;\left( \hat{R}^j
-\left\langle \hat{R}^j\right\rangle \right)\right\} \right) \;,
\label{covm}
\end{equation}
with $\left\langle \mathbf{\hat{R}}\right\rangle \equiv \mathrm{Tr}\left[
\hat{\rho}\;\mathbf{\hat{R}}\right] ,$ provided the second moments of $\hat{\rho}
$ are finite. The positivity of $\hat{\rho}$ together with the commutation
relations \eqref{[RR]} imply the Robertson-Heisenberg uncertainty relation
\begin{equation}
\sigma \geq \pm i\Delta \;,  \label{heis}
\end{equation}
where the inequality is meant to hold for both plus and minus signs in the right-hand-side.
We also recall that the Williamson theorem~\cite{WILL} ensures that given a
symmetric positive definite matrix $\sigma $, there exists a symplectic
matrix $S$ such that
\begin{equation}
S\sigma S^{T}=\bigoplus_{i=1}^{n}\nu _{i}\mathbbm{1}_{2}\;,\qquad \nu
_{i}>0\;.
\end{equation}
The $\nu _{i}$ are called symplectic eigenvalues of $\sigma $ and can be
computed as the positive ordinary eigenvalues of the matrix $i\sigma\Delta^{-1}$
(they come in couples of opposite sign). The Heisenberg uncertainty
principle \eqref{heis} is hence satisfied iff they are all greater than one:
\begin{equation}
\nu _{j}\geq 1\;,\qquad j=1,\ldots ,n\;.  \label{seig}
\end{equation}
The symplectic condition \eqref{symplectic} simplifies in the case of one
mode. Indeed for any $2\times 2$-matrix $M$ we have
\begin{equation}
M\Delta M^{T}=\Delta \det M\;,  \label{det1mode}
\end{equation}
therefore a $2\times 2$-matrix $S$ is symplectic iff
\begin{equation}
\det S=1\;.
\end{equation}
We recall also that a symmetric positive definite $2\times 2$-matrix $\sigma
$ has a single symplectic eigenvalue, given by
\begin{equation}
\nu ^{2}=\det \sigma \;;
\end{equation}
then from \eqref{seig} $\sigma \geq 0$ is the covariance matrix of a quantum
state iff
\begin{equation}
\det \sigma \geq 1\;.  \label{state1}
\end{equation}

\subsection{Gaussian states}

States with positive Wigner function~(\ref{WIGWIG}) form a convex subset $
\mathfrak{P}$ in the space of the density operators $\mathfrak{S}$ of the
system. The set $\mathfrak{G}$ of Gaussian states is a proper subset of $
\mathfrak{P}$. A Gaussian state $\hat{\rho}_{\mathbf{x},\,\sigma }\in
\mathfrak{G}$ with covariance matrix $\sigma $ and first moments $\mathbf{x}$
is defined by the property of having Gaussian characteristic function
\begin{equation}
\chi (\mathbf{k})=e^{-\frac{1}{4}\mathbf{k}^{T}\sigma \mathbf{k}+i\mathbf{k}
^{T}\mathbf{x}}\;,
\end{equation}
i.e. Gaussian Wigner function
\begin{equation}
W(\mathbf{r})=\frac{1}{\sqrt{\det\left(\pi\,\sigma\right)}}\;e^{-(\mathbf{r}-\mathbf{x}
)^{T}\sigma ^{-1}(\mathbf{r}-\mathbf{x})}\;.
\end{equation}

For $\sigma = \mathbbm{1}_{2n}$ we obtain the family of coherent states $\hat{\rho}_{\mathbf{x},\,\mathbbm{1}_{2n}},\;\mathbf{x}\in \mathbbm{R}^{2n}$.
The Husimi function $\mathrm{Tr}\left(\hat{\rho}_{\mathbf{x},\,\mathbbm{1}_{2n} }\hat{\rho}\right)$ of any bounded operator $\hat{\rho}$ uniquely defines $\hat{\rho}$ .
It follows that the linear span of the set of coherent states, and hence of all Gaussian states, is dense in the Hilbert space of
Hilbert-Schmidt operators $\mathfrak{H}$. Similarly, these linear spans are dense in the Banach space of trace-class
operators $\mathfrak{T}$.

\subsection{The Wigner-positive states and the convex hull of Gaussian states
}

\label{subsec1} Starting from the vacuum, devices as simple as
beam-splitters combined with one-mode squeezers permit (at least in
principle) to realize all the elements of $\mathfrak{G}$. Then, choosing
randomly according to a certain probability distribution which Gaussian
state to produce, it is in principle possible to realize all the states in
the convex hull $\mathfrak{C}$ of the Gaussian ones, i.e. all the states $
\hat{\rho}$ that can be written as
\begin{equation}
\hat{\rho}=\int\hat{\rho}_{\mathbf{x},\,\sigma}\;d\mu(\mathbf{x},\sigma)\;,
\label{gaussenv}
\end{equation}
where $\mu$ is the associated probability measure of the process.

It is easy to verify that $\mathfrak{C}$ is strictly larger than $\mathfrak{G
}$, i.e. there exist states (\ref{gaussenv}) which are not Gaussian. On the
other hand, one can observe that \eqref{gaussenv} implies
\begin{equation}
W_{\hat{\rho}}(\mathbf{r})=\int \frac{1}{\sqrt{\det\left(\pi\,\sigma\right)}}
\;e^{-(\mathbf{r}-\mathbf{x})^T\sigma^{-1}(\mathbf{r}-\mathbf{x})}\;d\mu(
\mathbf{x},\sigma)>0\;,
\end{equation}
so also $\mathfrak{C}$ is included into $\mathfrak{P}$, see Fig.~\ref{FigSys}
. There are however elements of $\mathfrak{P}$ which are not contained in $
\mathfrak{C}$: for example, any finite mixture of Fock states
\begin{equation}
\hat{\rho}=\sum_{n=0}^Np_n|n\rangle\langle n|\qquad N<\infty\qquad
p_n\geq0\qquad\sum_{n=0}^Np_n=1
\end{equation}
is not even contained in the weak closure of $\mathfrak{C}$, even if some of
them have positive Wigner function~\cite{WERNER}.

\section{Characterization of Gaussian-to-Gaussian maps}

\label{S:PROBLEM}

Determining whether a given state $\hat{\rho}$ belongs to the convex hull $
\mathfrak{C}$ of the Gaussian set is a difficult problem~\cite
{RADIM1,DET1,DET2,DET3}. Then, there comes the need to find criteria to
certify that $\hat{\rho}$ \emph{cannot} be written in the form
\eqref{gaussenv}. A possible idea is to consider a non-positive
superoperator $\Phi$ sending any Gaussian state into a state~\cite{WERNER}.
By linearity $\Phi$ will also send any state of $\mathfrak{C}$ into a state,
therefore if $\Phi(\hat{\rho})$ is not a state, $\hat{\rho}$ cannot be an
element of $\mathfrak{C}$: in other words, the transformation $\Phi$ acts as
a mathematical \textit{probe} for $\mathfrak{C}$. In what follows we will
focus on those probes which are also Gaussian transformations, i.e. which
not only send $\mathfrak{G}$ into states, but which ensure that the output
states $\Phi(\hat{\rho})$ are again elements of $\mathfrak{G}$. Then the
following characterization theorem holds

\begin{thm}
\label{gaussthmq} Let $\Phi $ be a linear bounded map of
the space $\mathfrak{H}$ of Hilbert-Schmidt operators, sending the set of Gaussian states $\mathfrak{G}$
into itself. Then its action in terms of the characteristic function~(\ref{char}), the first moments and
the covariance matrix \eqref{covm} is of the form
\begin{eqnarray}
\Phi &:&\chi (\mathbf{k})\rightarrow \chi \left( K^{T}\mathbf{k}\right)
\;e^{-\frac{1}{4}\mathbf{k}^{T}\alpha \mathbf{k}+i\mathbf{k}^{T}\mathbf{y}_0
}\;,  \label{channel} \\
\Phi &:&\mathbf{x} \rightarrow K\mathbf{x}+\mathbf{y}_0 \label{Phix}\\
\Phi &:&\sigma \rightarrow K\sigma K^{T}+\alpha \;,  \label{Phicm}
\end{eqnarray}
where $\mathbf{y}_0$ is an $\mathbb{R}^{n}$ vector, and $K$ and $\alpha $ are $
2n\times 2n$ real matrices such that $\alpha $ is symmetric, and for any $
\sigma \geq \pm i\Delta $
\begin{equation}
K\sigma K^{T}+\alpha \geq \pm i\Delta \;,  \label{positivity}
\end{equation}
where the inequalities are meant to hold for both plus and minus signs in the right-hand-sides.
\end{thm}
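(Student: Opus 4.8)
The plan is to work entirely at the level of characteristic functions. By the Parseval-type isomorphism~(\ref{HS}), $\Phi$ corresponds to a bounded linear operator $\tilde\Phi$ on $L^2(\mathbb{R}^{2n})$ acting as $\chi_{\hat\rho}\mapsto\chi_{\Phi(\hat\rho)}$, and by the density of the linear span of the coherent (hence Gaussian) states recalled above, $\tilde\Phi$ is completely determined by its action on the Gaussian characteristic functions, which I abbreviate $g_{\mathbf{x},\sigma}(\mathbf{k})\equiv e^{-\frac14\mathbf{k}^T\sigma\mathbf{k}+i\mathbf{k}^T\mathbf{x}}$. Since $\Phi$ maps $\mathfrak{G}$ into itself, for every admissible $(\mathbf{x},\sigma)$ we may write $\tilde\Phi\,g_{\mathbf{x},\sigma}=g_{F(\mathbf{x},\sigma),\,G(\mathbf{x},\sigma)}$ for an output first moment $F$ and covariance $G$, and trace preservation forces $g_{F,G}(\mathbf 0)=1$, so these exponents carry no constant term. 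The entire statement then reduces to showing $F(\mathbf{x},\sigma)=K\mathbf{x}+\mathbf{y}_0$ and $G(\mathbf{x},\sigma)=K\sigma K^T+\alpha$ for a \emph{single} pair $(K,\alpha)$ with $\alpha$ symmetric; once this is established, (\ref{channel}) follows by direct substitution, and (\ref{positivity}) is immediate because each output covariance, being that of a genuine state, satisfies the Heisenberg relation~(\ref{heis}) while the input ranges over all $\sigma\ge\pm i\Delta$.

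First I would record the regularity that lets me exploit linearity: the map $(\mathbf{x},\sigma)\mapsto g_{\mathbf{x},\sigma}$ is real-analytic as an $L^2$-valued map on the interior of the admissible domain, its partial derivatives being polynomial-in-$\mathbf{k}$ multiples of $g_{\mathbf{x},\sigma}$ with locally bounded norm, so that $\tilde\Phi\,g_{\mathbf{x},\sigma}$ is real-analytic in $(\mathbf{x},\sigma)$ and, since characteristic functions are continuous, $F$ and $G$ depend smoothly on $(\mathbf{x},\sigma)$. The key structural input is then a self-consistency identity obtained by integrating $\tilde\Phi\,g_{\mathbf{x},\mathbbm{1}}=g_{F(\mathbf{x},\mathbbm{1}),\,G(\mathbf{x},\mathbbm{1})}$ against a Gaussian weight $\phi$ in $\mathbf{x}$ and passing $\tilde\Phi$ under the (Bochner) integral by boundedness: choosing $\phi$ so that $\int\phi(\mathbf{x})\,g_{\mathbf{x},\mathbbm{1}}\,d\mathbf{x}=g_{\mathbf 0,\tau}$ for some $\tau\ge\mathbbm{1}$ yields
\[ g_{F(\mathbf 0,\tau),\,G(\mathbf 0,\tau)}(\mathbf{k})=\int\phi(\mathbf{x})\,e^{\,i\mathbf{k}^T F(\mathbf{x},\mathbbm{1})-\frac14\mathbf{k}^T G(\mathbf{x},\mathbbm{1})\mathbf{k}}\,d\mathbf{x}\;. \]
I would use this identity, valid for a whole family of Gaussian weights, to force the exponents to be jointly quadratic: matching the single Gaussian on the left against the $\mathbf{x}$-average on the right shows that $G(\mathbf{x},\mathbbm{1})$ cannot depend on $\mathbf{x}$ and that $F(\mathbf{x},\mathbbm{1})$ must be affine, $F(\mathbf{x},\mathbbm{1})=K\mathbf{x}+\mathbf{y}_0$. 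Crucially, the \emph{same} $K$ governing the first moment then reappears in the induced covariance through the spreading of the weight, $G(\mathbf 0,\tau)=G(\mathbf 0,\mathbbm{1})+K(\tau-\mathbbm{1})K^T$, which is exactly the correlated form $\sigma\mapsto K\sigma K^T+\alpha$ rather than two independent maps. Translation in $\mathbf{x}$ then handles non-centred inputs, and the real-analyticity established above propagates the affine-quadratic form from $\tau\ge\mathbbm{1}$ to the whole admissible domain $\sigma\ge\pm i\Delta$.

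I expect the decisive obstacle to be precisely this rigidity step. The Gaussian characteristic functions $\{g_{\mathbf{x},\sigma}\}$ are linearly independent in the strongest sense—distinct plane waves admit no nontrivial finite, nor even integrated, relations—so no purely algebraic use of linearity can constrain $F$ and $G$; all the rigidity must be extracted from boundedness of $\tilde\Phi$, which is exactly what legitimises passing $\tilde\Phi$ through the integral against $\phi$ and hence what converts ``Gaussian weight in, Gaussian out'' into an analytic constraint on the exponent. Turning that constraint into the sharp conclusions that $G$ is independent of $\mathbf{x}$ and that $F$ is affine with the covariance map controlled by the very same $K$ is the technical heart of the argument; the remaining pieces—substitution to recover (\ref{channel})--(\ref{Phicm}), symmetry of $\alpha$, and the Heisenberg bound~(\ref{positivity})—are then routine.
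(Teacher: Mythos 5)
Your setup coincides with the paper's: you pass to characteristic functions via the Parseval isomorphism, write $\tilde\Phi\,g_{\mathbf{x},\sigma}=g_{F(\mathbf{x},\sigma),\,G(\mathbf{x},\sigma)}$, exploit the Gaussian convolution identity $\int g_{\mathbf{x}',\sigma'}\,\mu_{\mathbf{x},\sigma}(d\mathbf{x}')=g_{\mathbf{x},\sigma+\sigma'}$ pushed through $\tilde\Phi$ by boundedness, and finish by density of the linear span of Gaussians in $\mathfrak{H}$. The problem is that the step you yourself call ``the technical heart'' --- deducing from the integrated identity that $G$ does not depend on $\mathbf{x}$, that $F$ is affine in $\mathbf{x}$, and that one and the same matrix $K$ controls both the first-moment map and the covariance map --- is asserted rather than proved. ``Matching the single Gaussian on the left against the $\mathbf{x}$-average on the right'' is not by itself an argument: a priori a Gaussian mixture of Gaussians with $\mathbf{x}$-dependent covariance and a nonlinear centre could conceivably reproduce a Gaussian for the weights you use, and nothing in your text converts the family of integral identities into pointwise structural conclusions on $F$ and $G$. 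Since the entire content of the theorem sits in this step, the proposal as written is incomplete.

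The paper's mechanism for closing exactly this gap is the one you are missing. The family of convolution identities, as the weight varies over all Gaussian measures, says precisely that for each fixed $\mathbf{k}$ the function $u(\mathbf{x},\sigma)=\chi_{F(\mathbf{x},\sigma'+\sigma),\,G(\mathbf{x},\sigma'+\sigma)}(\mathbf{k})$ solves the Cauchy problem for the diffusion equation $du=\tfrac14\,\partial_i\,d\sigma^{ij}\,\partial_j u$, of which $\mu_{\mathbf{x},\sigma}$ is the fundamental solution. This does two jobs at once. First, parabolic regularity gives smoothness of $u$, hence of $F$ and $G$, without your rather delicate claim that the $L^2$-valued map $(\mathbf{x},\sigma)\mapsto g_{\mathbf{x},\sigma}$ and the inverse parametrization of output Gaussians are real-analytic. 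Second, substituting the ansatz $u=\exp\left(-\tfrac14\mathbf{k}^{T}G\,\mathbf{k}+i\mathbf{k}^{T}F\right)$ and comparing powers of $\mathbf{k}$ produces on the right-hand side a term quartic in $\mathbf{k}$ proportional to $\left(\mathbf{k}^{T}\partial_iG\,\mathbf{k}\right)d\sigma^{ij}\left(\mathbf{k}^{T}\partial_jG\,\mathbf{k}\right)$, which must vanish because the left-hand side is quadratic; this forces $\partial_iG=0$, and the remaining orders give $dG=\partial_iF\,d\sigma^{ij}\,\partial_jF^{T}$ and $dF=\tfrac14\,\partial_i\,d\sigma^{ij}\,\partial_jF$, which integrate to $F=K\mathbf{x}+\mathbf{y}_0$ and $G=K\sigma K^{T}+\alpha$ with a single constant $K$. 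If you supply this (or an equivalent) computation, the rest of your outline --- substitution to obtain (\ref{channel}), symmetry of $\alpha$, and (\ref{positivity}) from the Heisenberg relation of the output states --- goes through as you describe.
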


The condition \eqref{positivity} imposes that $\Phi (\hat{\rho})$ is a
Gaussian state for any Gaussian $\hat{\rho}$.
It is weaker than the condition which guarantees complete positivity (see Eq.~\eqref{CPTP} below), which also ensures the mapping of Gaussian states into Gaussian states. An example of not completely positive mapping fulfilling \eqref{positivity} is provided by the dilatations defined in Eq.~\eqref{MAP}.
Such mappings in fact,
while explicitly not CP~\cite{WERNER}, correspond to the transformations~\eqref{channel} where we set $\mathbf{y}_0=\mathbf{0}$ and take
\begin{equation}
K=\lambda \mathbbm{1}_{2n}\;,\qquad \alpha =0\;,  \label{DILATATION}
\end{equation}
with $|\lambda |>1$. At the level of the covariance matrices~(\ref{Phicm}),
this implies $\sigma ^{\prime }=\lambda ^{2}\sigma $ which clearly still
preserve the Heisenberg inequality (\ref{heis}) (indeed ${\lambda }
^{2}\sigma \geq \sigma \geq \pm i\Delta $), ensuring hence the condition~\eqref{positivity}. Dilatations are not bounded with respect to the trace norm (see Theorem \ref{unbounded} of Appendix \ref{app}). This explains why Theorem \ref{gaussthmq} is formulated on the space of Hilbert-Schmidt operators. Indeed, via the Parceval formula we can prove that dilatations are bounded in this space:
\begin{eqnarray*}
\left\Vert \Phi (\hat{\rho})\right\Vert ^{2} &=& \int \,\left\vert \chi _{\hat{
\rho}}(\lambda \mathbf{k})\right\vert^{2}\;\frac{d\mathbf{k}}{(2\pi )^{n}}
=\\
&=&\int \,\left\vert \chi _{\hat{\rho}}(\mathbf{k})\right\vert^{2}\;\frac{d
\mathbf{k}}{(2\pi \lambda ^{2})^{n}}=\frac{1}{\lambda ^{2n}}\left\Vert \hat{
\rho}\right\Vert ^{2}\;.
\end{eqnarray*}
For $\lambda=\frac{1}{\mu}$ with $|\mu|>1$ the transformation (\ref
{DILATATION}) yields a contraction of the output Wigner quasi-distribution.
In the Hilbert space $\mathfrak{H}$, the contraction by $\lambda$ is $\lambda^{2n}$ times the adjoint of the dilatation
by $\mu=\frac{1}{\lambda}$, as follows from the Parceval formula (\ref{HS}). As different from the dilatations, these mappings no longer ensure that all
Gaussian states will be transformed into proper density operators. For
instance, the vacuum state is mapped into a non-positive operator (this shows in particular that the contractions and hence
the adjoint dilatations are non-positive maps).

Another example of transformation not fulfilling the CP requirement~(\ref
{CPTP}) but respecting~(\ref{positivity}) is the (complete) transposition
\begin{equation}
K=T_{2n}\qquad \alpha =0\;,  \label{TRANS}
\end{equation}
that is well-known not to be CP. Unfortunately, being positive it cannot be
used to certify that a given state is not contained in the convex hull $
\mathfrak{C}$ of the Gaussian ones. Is there anything else? We will prove
that for one mode, any channel satisfying \eqref{positivity} can be written
as a dilatation composed with a completely positive channel, possibly
composed with the transposition~(\ref{TRANS}), see Fig.~\ref{FigSys1}. We
will also show that in the multi-mode case this simple decomposition does
not hold in general; however, it still holds if we restrict to the channels
that do not add noise, i.e. with $\alpha =0$.

\begin{proof} Let the Gaussian state $\hat{\rho}_{\mathbf{x},\,\sigma }$ be
sent into the Gaussian state $\hat{\rho}_{\mathbf{y},\,\tau }$ with
covariance matrix $\tau (\mathbf{x},\,\sigma )$ and first moment $\mathbf{y}(
\mathbf{x},\,\sigma )$, with the characteristic function
\begin{equation}
\chi _{\Phi \mathbf{(\hat{\rho}_{\mathbf{x},\,\sigma })}}(\mathbf{k})\equiv
\chi _{y,\tau }(\mathbf{k})=e^{-\frac{1}{4}\mathbf{k}^{T}\,\tau \,\mathbf{k}
+i\mathbf{k}^{T}\,\mathbf{y}}\;.  \label{Phichi}
\end{equation}
We first remark that the functions $\tau (\mathbf{x},\,\sigma )$ and $
\mathbf{y}(\mathbf{x},\,\sigma )$ are continuous. The map $\Phi $ is bounded
and hence continuous in the Hilbert-Schmidt norm. The required continuity
follows from

\begin{lem} The bijection $(\mathbf{x},\,\sigma )\rightarrow \hat{\rho}_{\mathbf{x
},\,\sigma }$ is bicontinuous in the Hilbert-Schmidt norm.\end{lem}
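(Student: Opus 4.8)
The plan is to reduce the statement to a single explicit computation: the Hilbert--Schmidt overlap of two Gaussian states. Writing $\langle\hat A,\hat B\rangle=\mathrm{Tr}(\hat A^{\dag}\hat B)$ for the inner product of $\mathfrak{H}$ and recalling that $\hat\rho_{\mathbf{x},\sigma}$ has the Gaussian characteristic function $\chi_{\mathbf{x},\sigma}(\mathbf{k})=e^{-\frac14\mathbf{k}^T\sigma\mathbf{k}+i\mathbf{k}^T\mathbf{x}}$, the Parceval formula \eqref{HS} turns the overlap into a Gaussian integral over $\mathbf{k}$, giving
\begin{equation}
\langle \hat\rho_{\mathbf{x}_1,\sigma_1},\hat\rho_{\mathbf{x}_2,\sigma_2}\rangle=\frac{e^{-(\mathbf{x}_2-\mathbf{x}_1)^T(\sigma_1+\sigma_2)^{-1}(\mathbf{x}_2-\mathbf{x}_1)}}{\sqrt{\det\left[\tfrac12(\sigma_1+\sigma_2)\right]}}\;.
\end{equation}
This expression is jointly continuous on the open set where $\sigma_1,\sigma_2$ are positive definite, and expanding $\|\hat\rho_{\mathbf{x}_1,\sigma_1}-\hat\rho_{\mathbf{x}_2,\sigma_2}\|^2$ through it yields a continuous function that vanishes on the diagonal. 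This already establishes continuity of $(\mathbf{x},\sigma)\mapsto\hat\rho_{\mathbf{x},\sigma}$, one of the two directions of the claim.

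For the inverse continuity I would probe the state with coherent states. Specializing the second argument to $\hat\rho_{\mathbf{a},\mathbbm{1}_{2n}}$, which has unit Hilbert--Schmidt norm, the overlap $g_{\mathbf{x},\sigma}(\mathbf{a})=\langle\hat\rho_{\mathbf{a},\mathbbm{1}_{2n}},\hat\rho_{\mathbf{x},\sigma}\rangle$ is, by the formula above, a Gaussian bump in $\mathbf{a}$ centred at $\mathbf{x}$, with covariance $\sigma+\mathbbm{1}_{2n}$ and peak height $h=\det[\tfrac12(\sigma+\mathbbm{1}_{2n})]^{-1/2}$. If $\hat\rho_{\mathbf{x}^{(m)},\sigma^{(m)}}\to\hat\rho_{\mathbf{x}_1,\sigma_1}$ in $\mathfrak{H}$, then since the probe has unit norm, Cauchy--Schwarz gives $|g_{\mathbf{x}^{(m)},\sigma^{(m)}}(\mathbf{a})-g_{\mathbf{x}_1,\sigma_1}(\mathbf{a})|\le\|\hat\rho_{\mathbf{x}^{(m)},\sigma^{(m)}}-\hat\rho_{\mathbf{x}_1,\sigma_1}\|$ uniformly in $\mathbf{a}$, so the Gaussian bumps converge uniformly on $\mathbb{R}^{2n}$.

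From this uniform convergence I would read off the parameters. The suprema converge, so the heights satisfy $h^{(m)}\to h_1>0$, whence $\det(\sigma^{(m)}+\mathbbm{1}_{2n})$ stays bounded; since a covariance matrix obeys $\sigma^{(m)}\ge0$ and hence $\sigma^{(m)}+\mathbbm{1}_{2n}\ge\mathbbm{1}_{2n}$ has all eigenvalues $\ge1$, a bounded determinant forces every eigenvalue to be bounded, so $\{\sigma^{(m)}\}$ lies in a compact set. Similarly, were $|\mathbf{x}^{(m)}|\to\infty$, the bump of near-constant height $h_1$ would sit where the limiting Gaussian $g_{\mathbf{x}_1,\sigma_1}$ is arbitrarily small, contradicting uniform convergence; hence $\{\mathbf{x}^{(m)}\}$ is bounded too. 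Every subsequence then has a sub-subsequence converging to some $(\mathbf{x}_*,\sigma_*)$, and by the continuity just established together with the injectivity of $(\mathbf{x},\sigma)\mapsto\hat\rho_{\mathbf{x},\sigma}$ (a Gaussian state determines its moments uniquely) this limit must equal $(\mathbf{x}_1,\sigma_1)$; as the sub-subsequential limit is always the same, the whole sequence converges and bicontinuity follows. The delicate point, and the one I expect to demand the most care, is exactly this last step --- excluding that the covariance matrix blows up or squeezes off to the boundary and that the first moment runs to infinity while the states still seem to converge. The explicit overlap, and in particular the unit-norm coherent probe that renders the convergence uniform, is precisely the tool that controls these escapes; the underlying Gaussian integral is routine.
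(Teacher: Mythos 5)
Your proposal is correct, and its computational core --- the Hilbert--Schmidt overlap of two Gaussian states obtained from the Parceval formula~(\ref{HS}) as an explicit Gaussian integral --- is precisely the computation the paper invokes: its entire proof is the single remark that the lemma ``follows from the Parceval formula by direct computation of the Gaussian integral'' $\int\left\vert\chi_{\hat{\rho}_{\mathbf{x},\,\sigma}}(\mathbf{k})-\chi_{\hat{\rho}_{\mathbf{x}',\,\sigma'}}(\mathbf{k})\right\vert^{2}\,d\mathbf{k}/(2\pi)^{n}$. Where you genuinely diverge is in the inverse direction. The paper tacitly expects this also to be read off from the explicit distance formula (for instance, the cross term can be controlled with Minkowski's determinant inequality~(\ref{mink}) and the arithmetic--geometric mean inequality, so the distance can be small only when the moments are close); you instead probe the states with unit-norm coherent states, upgrade Hilbert--Schmidt convergence to uniform convergence of the resulting Gaussian bumps via Cauchy--Schwarz, and conclude by a compactness--plus--injectivity subsequence argument. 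Both routes work. Yours is longer but makes fully explicit the only nontrivial content of the lemma --- excluding escape of $\mathbf{x}$ to infinity and degeneration or blow-up of $\sigma$ --- which the paper's one-line proof silently absorbs into ``direct computation.'' One small point worth stating when you write this up: the subsequential limit $\sigma_{*}$ still satisfies the closed condition $\sigma_{*}\geq\pm i\Delta$, hence is nonsingular, so the overlap formula and the forward continuity you established do apply to it.
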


The proof of the lemma follows from the Parceval formula by direct computation of the
Gaussian integral
\begin{equation*}
\int \,\left\vert \chi _{\hat{\rho}_{\mathbf{x},\,\sigma }}(\mathbf{k})-\chi
_{\hat{\rho}_{\mathbf{x}^{\prime },\,\sigma ^{\prime }}}(\mathbf{k}
)\right\vert^{2}\;\frac{d\mathbf{k}}{(2\pi )^{n}}\;.
\end{equation*}
Next, we have the identity
\begin{equation}\label{iden}
\int \hat{\rho}_{\mathbf{x}^{\prime },\,\sigma ^{\prime }}\,\mu _{\mathbf{x}
,\,\sigma }(d\mathbf{x}^{\prime })=\hat{\rho}_{\mathbf{x},\,\sigma ^{\prime }+\sigma
}\;,
\end{equation}
where $\mu _{\mathbf{x},\,\sigma }$ is Gaussian probability measure with the
first moments $\mathbf{x}$ and covariance matrix $\sigma ,$ which is
verified by comparing the quantum characteristic functions of both sides.

Applying to both sides of this identity the continuous map $\Phi $ we obtain
\begin{equation*}
\int \hat{\rho}_{\mathbf{y}(\mathbf{x}^{\prime },\,\sigma ^{\prime }),\,\tau
(\mathbf{x}^{\prime },\,\sigma ^{\prime })}\,\mu _{\mathbf{x},\,\sigma
}(d\mathbf{x}^{\prime })=\hat{\rho}_{\mathbf{y}(\mathbf{x},\,\sigma ^{\prime }+\sigma
),\,\tau (\mathbf{x},\,\sigma ^{\prime }+\sigma )}\;.
\end{equation*}
By taking the quantum characteristic functions of both sides, we obtain
\begin{eqnarray}
\int \chi _{\mathbf{y}(\mathbf{x}^{\prime },\,\sigma ^{\prime }),\,\tau (
\mathbf{x}^{\prime },\,\sigma ^{\prime })}(\mathbf{k})\;\mu _{\mathbf{x}
,\,\sigma }(d\mathbf{x}^{\prime })=&&\nonumber\\
=\chi _{\mathbf{y}(\mathbf{x},\,\sigma ^{\prime
}+\sigma ),\,\tau (\mathbf{x},\,\sigma ^{\prime }+\sigma )}(\mathbf{k}
)\;,&&\quad \mathbf{k\in }\mathbb{R}^{n}\;.  \label{cauchy}
\end{eqnarray}
Now notice that $\mu _{\mathbf{x},\,\sigma }$ is the
fundamental solution of the diffusion equation:
\begin{equation}
du=\frac{1}{4}\partial _{i}d\sigma ^{ij}\partial _{j}u\;,  \label{dphi}
\end{equation}
where $d$ is the differential with respect to $\sigma $, i.e.
\begin{equation}
d=\sum_{i,\,j=1}^{m}d\sigma ^{ij}\frac{\partial }{\partial \sigma ^{ij}}\;
\end{equation}
and
\begin{equation}
\partial _{i}=\frac{\partial }{\partial x^{i}}\;,
\end{equation}
with the sum over the repeated indices. Relation (\ref{cauchy}) means
that for any fixed $\mathbf{k},$ the function $u(\mathbf{x},\,\sigma )=\chi
_{\mathbf{y}(\mathbf{x},\,\sigma ^{\prime }+\sigma ),\,\tau (\mathbf{x}
,\,\sigma ^{\prime }+\sigma )}(\mathbf{k})$ is the solution of the Cauchy
problem for the equation (\ref{cauchy}) with the initial condition $u(
\mathbf{x},\,0)=\chi _{\mathbf{y}(\mathbf{x},\,\sigma ^{\prime }),\,\tau (
\mathbf{x},\,\sigma ^{\prime })}(\mathbf{k}).$ Since the last function is
bounded and continuous, the solution of the Cauchy problem is infinitely
differentiable in $(\mathbf{x},\,\sigma)$ for $\sigma >0.$ Substituting
\begin{equation*}
u(\mathbf{x},\,\sigma )=\exp \left[ -\frac{1}{4}\mathbf{k}^{T}\,\tau (
\mathbf{x},\,\sigma ^{\prime }+\sigma )\,\mathbf{k}+i\mathbf{k}^{T}\,\mathbf{
y}(\mathbf{x},\,\sigma ^{\prime }+\sigma )\right]
\end{equation*}
into (\ref{dphi}) and differentiating the exponent, we obtain the identity
\begin{equation*}
-\frac{1}{4}\mathbf{k}^{T}\;d\tau \;\mathbf{k}+i\mathbf{k}^{T}d\mathbf{y} =
\end{equation*}\begin{eqnarray*}
&=&\frac{1}{4}\left(\frac{1}{4}\mathbf{k}^T\partial_i\tau\,\mathbf{k}-i
\mathbf{k}^T\partial_i
\mathbf{y}\right)d\sigma^{ij}\left(\frac{1}{4}\mathbf{k}^T\partial_j\tau\,
\mathbf{k}-i\mathbf{k}^T\partial_j\mathbf{y}\right)\\
&&-\frac{1}{16}\mathbf{k}^T\left(\partial_id\sigma^{ij}\partial_j\tau\right)
\mathbf{k}+
\frac{i}{4}\mathbf{k}^T\partial_id\sigma^{ij}\partial_j\mathbf{y}\;.
\end{eqnarray*}
We can now compare the two expressions. Since the left hand side
contains only terms at most quadratic in $\mathbf{k}$, we get
\begin{equation}
\partial _{i}\tau =0\;,
\end{equation}
i.e. $\tau $ does not depend on $\mathbf{x}$. Then, the right hand side
simplifies into
\begin{equation}
-\frac{1}{4}\mathbf{k}^{T}\left( \partial _{i}\mathbf{y}\,d\sigma
^{ij}\partial _{j}\mathbf{y}^{T}\right) \mathbf{k}+\frac{i}{4}\mathbf{k}
^{T}\partial _{i}d\sigma ^{ij}\partial _{j}\mathbf{y}\;.  \label{logphi3}
\end{equation}
Comparing again with the left hand side, we get
\begin{eqnarray}
d\tau (\sigma ) &=&\partial _{i}\mathbf{y}\,d\sigma ^{ij}\partial _{j}
\mathbf{y}^{T}  \label{deltatau} \\
d\mathbf{y}(\mathbf{x},\,\sigma ) &=&\frac{1}{4}\partial _{i}d\sigma
^{ij}\partial _{j}\mathbf{y}\;.  \label{deltay}
\end{eqnarray}
Since $d\tau (\sigma )$ does not depend on $\mathbf{x}$, also $\partial _{i}
\mathbf{y}$ cannot, i.e. $\mathbf{y}$ is a linear function of $\mathbf{x}$:
\begin{equation}
\mathbf{y}(\mathbf{x},\,\sigma )=K(\sigma )\,\mathbf{x}+\mathbf{y}
_{0}(\sigma )\;,
\end{equation}
where $K(\sigma )$ and $\mathbf{y}_{0}(\sigma )$ are still arbitrary
functions. But now \eqref{deltay} becomes
\begin{equation}
d\mathbf{y}(\mathbf{x},\,\sigma )=0\;,
\end{equation}
i.e. $\mathbf{y}$ does not depend on $\sigma $, i.e.
\begin{equation}
\mathbf{y}=K\mathbf{x}+\mathbf{y}_{0}\;,  \label{y}
\end{equation}
with $K$ and $\mathbf{y}_{0}$ constant. Finally, \eqref{deltatau} becomes
\begin{equation}
d\tau (\sigma )=K\,d\sigma \,K^{T}\;,
\end{equation}
that can be integrated into
\begin{equation}
\tau (\sigma )=K\,\sigma \,K^{T}+\alpha \;.  \label{tau}
\end{equation}

Thus we get that the transformation rules for the first and second
moments are given by Eqs. \eqref{Phix} and \eqref{Phicm}. The positivity condition
for quantum Gaussian states implies \eqref{positivity}. The map defined by
\eqref{channel} correctly reproduces \eqref{Phix} and \eqref{Phicm}, so it
coincides with $\Phi $ on the Gaussian states. Since it is linear and
continuous, and the linear span of of Gaussian states is dense in $\mathfrak{
H},$ it coincides with $\Phi $ on the whole  $\mathfrak{H}$.
\end{proof}

\begin{rem} A similar argument can be used to prove that any linear positive map $\Phi$ of the Banach space $\mathfrak{T}$
of trace-class operators, leaving the set of Gaussian states globally invariant, has the form~(\ref{channel}).
By Lemma 2.2.1 of \cite{davies} any such map is bounded, and the proof of Theorem \ref{gaussthmq}
can be repeated, with $\mathfrak{H}$ replaced by $\mathfrak{T}$. In addition, since the trace of operator is continuous
on $\mathfrak{T}$, the formula~(\ref{channel})  implies preservation of trace. However, the positivity condition is difficult to express
in terms of the map parameters $\mathbf{y}_{0}, K, \alpha$.

On the other hand, if $\Phi$ is completely positive then the necessary and sufficient condition is
\begin{equation}
\alpha \geq \pm i(\Delta -\Delta _{K})\;,  \label{CPTP}
\end{equation}
where
\begin{equation}
\Delta _{K}\equiv K\Delta K^{T}\;.  \label{DELTAK}
\end{equation}
Thus $\Phi$ is a quantum Gaussian channel \cite{GAUSC},
and the condition Eq.~(\ref{positivity}) is replaced by the more stringent constraint \eqref{CPTP}.

For automorphisms of the $C^*$-algebra of the Canonical Commutation Relations a similar characterization,
based on a different proof using partial ordering of Gaussian states, was first given in \cite{demoen,fannes}.
\end{rem}

\begin{rem}
There is a counterpart of Theorem \ref{gaussthmq} in probability theory:
\end{rem}
\begin{thm}
Let $\Phi $ be an endomorphism (linear bounded transformation) of the Banach space $
\mathcal{M}(\mathbb{R}^{n})$ of finite signed Borel measures on $\mathbbm{R}
^{n}$ (equipped with the total variation norm) having the Feller property (the dual  $\Phi^* $ leaves invariant the space of bounded continuous functions on $\mathbbm{R}^{n}$).
Then, if $\Phi$ sends the set of Gaussian probability measures into itself,  $\Phi $ is a Markov operator whose action in terms of characteristic functions is of the form \eqref{channel}, with the condition \eqref{positivity} replaced by $\alpha\geq 0$.
\begin{proof}
The proof is parallel to the proof of Theorem \ref{gaussthmq}, with replacement of \eqref{iden} by the corresponding identity for Gaussian probability measures.
As a result, we obtain that the action of $\Phi$ in terms of characteristic functions is given by (\ref{channel}) for any measure $\mu$
which is a linear combination of Gaussian probability measures. For arbitrary measure $\mu\in \mathcal{M}(\mathbb{R}^{n})$ the characteristic function of $\Phi(\mu)$ is
\begin{equation*}
\chi_{\Phi(\mu)}(\mathbf{k})=\int \mathbf{e}^{i\,\mathbf{k}^T\mathbf{x}}\;\Phi(\mu)(d\mathbf{x}) = \int \Phi^*\left(\mathbf{e}^{i\,\mathbf{k}^T\mathbf{x}}\right)\;\mu(d\mathbf{x})\;,
\end{equation*}
where $\Phi^*\left(\mathbf{e}^{i\,\mathbf{k}^T\mathbf{x}}\right)$ is continuous bounded function by the Feller property. Since the linear span of Gaussian probability measures is dense in $\mathcal{M}(\mathbb{R}^{n})$ in the weak topology defined by continuous bounded functions (it suffices to take Dirac's deltas, i.e, probability measures degenerated at the points of  $\mathbb{R}^{n}$) , the formula (\ref{channel}) extends to characteristic function of arbitrary finite signed Borel measure on $\mathbb{R}^{n}$. The action of $\Phi$ on the moments is given by \eqref{Phix} and \eqref{Phicm}.
The positivity of the output covariance matrix when the input is a Dirac delta implies $\alpha\geq0$.
\end{proof}
\end{thm}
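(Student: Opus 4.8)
The plan is to run the argument of Theorem~\ref{gaussthmq} almost verbatim on the Gaussian probability measures, and then to replace the Hilbert--Schmidt density argument by a weak-topology one that exploits the Feller property. First I would record the classical counterpart of the convolution identity~\eqref{iden}: if $g_{\mathbf{x},\sigma}$ denotes the Gaussian probability measure with mean $\mathbf{x}$ and covariance $\sigma$, then $\int g_{\mathbf{x}',\sigma'}\,\mu_{\mathbf{x},\sigma}(d\mathbf{x}')=g_{\mathbf{x},\sigma+\sigma'}$, which is just the additivity of means and covariances under convolution and is checked on characteristic functions. Since $\Phi$ is bounded it commutes with this integral, and by hypothesis it maps each $g_{\mathbf{x}',\sigma'}$ to a Gaussian probability measure $g_{\mathbf{y}(\mathbf{x}',\sigma'),\,\tau(\mathbf{x}',\sigma')}$; boundedness together with the continuity of $(\mathbf{x},\sigma)\mapsto g_{\mathbf{x},\sigma}$ for $\sigma>0$ (the measure analog of the Lemma) makes $\mathbf{y}$ and $\tau$ continuous. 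Taking characteristic functions then reproduces the Cauchy problem for the heat equation~\eqref{dphi}, and the very same matching of polynomial coefficients in $\mathbf{k}$ yields $\tau=K\sigma K^{T}+\alpha$ and $\mathbf{y}=K\mathbf{x}+\mathbf{y}_{0}$, i.e.\ the action~\eqref{channel} on every Gaussian measure, hence by linearity on every finite linear combination of Gaussian measures.

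The step that genuinely differs from Theorem~\ref{gaussthmq}, and which I expect to be the main obstacle, is the passage from linear combinations of Gaussians to an arbitrary signed measure. Here the Hilbert--Schmidt reasoning has no analog: the span of Gaussian measures consists of absolutely continuous measures and is \emph{not} dense in $\mathcal{M}(\mathbb{R}^{n})$ for the total-variation norm (a Dirac mass lies at maximal distance from any such combination), so boundedness of $\Phi$ alone cannot propagate~\eqref{channel}. This is exactly where the Feller hypothesis is needed. I would write
\[
\chi_{\Phi(\mu)}(\mathbf{k})=\int e^{i\mathbf{k}^{T}\mathbf{x}}\,\Phi(\mu)(d\mathbf{x})=\int \Phi^{*}\!\left(e^{i\mathbf{k}^{T}\cdot}\right)(\mathbf{x})\,\mu(d\mathbf{x}),
\]
and observe that by the Feller property $\Phi^{*}(e^{i\mathbf{k}^{T}\cdot})$ is bounded and continuous. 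Consequently both sides of~\eqref{channel}, viewed at fixed $\mathbf{k}$ as functions of $\mu$, are continuous linear functionals for the weak topology $\sigma(\mathcal{M},C_{b})$ (the right-hand side integrates the bounded continuous function $e^{-\frac14\mathbf{k}^{T}\alpha\mathbf{k}+i\mathbf{k}^{T}\mathbf{y}_{0}}e^{i\mathbf{k}^{T}K\mathbf{x}}$ against $\mu$). Since Gaussian probability measures weakly approximate every Dirac mass and their span is therefore $\sigma(\mathcal{M},C_{b})$-dense (a bounded continuous $f$ annihilating all Dirac masses vanishes identically), the two functionals agree on a dense set and hence everywhere, establishing~\eqref{channel} for all $\mu\in\mathcal{M}(\mathbb{R}^{n})$.

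It then remains to extract the normal form $\alpha\ge0$ and the Markov property. Because classically there is no uncertainty constraint, a Gaussian probability measure exists for every covariance $\sigma\ge0$, so~\eqref{Phicm} forces $K\sigma K^{T}+\alpha\ge0$ for all such $\sigma$; letting $\sigma\to0$ gives $\alpha\ge0$, which is precisely the classical replacement of~\eqref{positivity}. With $\alpha\ge0$ the factor $e^{-\frac14\mathbf{k}^{T}\alpha\mathbf{k}+i\mathbf{k}^{T}\mathbf{y}_{0}}$ is the characteristic function of a genuine Gaussian probability measure $g_{\mathbf{y}_{0},\alpha}$, while $\chi_{\mu}(K^{T}\mathbf{k})$ is the characteristic function of the pushforward of $\mu$ under $\mathbf{x}\mapsto K\mathbf{x}$; since a product of characteristic functions corresponds to a convolution, \eqref{channel} exhibits $\Phi(\mu)$ as $(K_{*}\mu)*g_{\mathbf{y}_{0},\alpha}$. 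Convolution and pushforward both preserve positivity, and evaluating~\eqref{channel} at $\mathbf{k}=\mathbf{0}$ shows $\Phi$ preserves total mass; hence $\Phi$ is a Markov operator, which completes the plan.
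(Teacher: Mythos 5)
Your proposal is correct and follows essentially the same route as the paper: the heat-equation argument on Gaussian measures via the classical convolution identity, then the extension to arbitrary signed measures by writing $\chi_{\Phi(\mu)}(\mathbf{k})=\int\Phi^{*}(e^{i\mathbf{k}^{T}\cdot})\,d\mu$ and using the $\sigma(\mathcal{M},C_{b})$-density of the span of Gaussians (through Dirac masses), and finally $\alpha\geq 0$ from degenerate inputs. The only difference is that you spell out explicitly why total-variation density fails and why $\alpha\geq 0$ yields the Markov property via the convolution representation $\Phi(\mu)=(K_{*}\mu)*g_{\mathbf{y}_{0},\alpha}$, details the paper leaves implicit.
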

\subsection{Contractions}

\label{contr}

A contraction
by $\lambda=\frac{1}{\mu}$ behaves properly on the restricted subset $\mathfrak{G}^{(>)}_{\mu^2}$ of $\mathfrak{G}$ formed by the Gaussian states whose
covariance matrix admits symplectic eigenvalues larger than ${\mu}^2$.
Indeed all elements of $\mathfrak{G}^{(>)}_{\mu^2}$ will be mapped into
proper Gaussian output states by the contraction (and by linearity also the
convex hull of $\mathfrak{G}^{(>)}_{\mu^2}$ will be mapped into proper
output density operators). We will prove that any transformation with this property can be written as a contraction of $1/\mu$, followed by a transformation of the kind of Theorem \ref{gaussthmq}.
Let us first notice that:

\begin{lem}
A set $(K,\alpha)$ satisfies \eqref{positivity} for any $\sigma$ with
symplectic eigenvalues greater than $\mu^2$ iff $(\mu K,\;\alpha)$ satisfies
\eqref{positivity} for any $\sigma\geq\pm i\Delta$.
\begin{proof}
$\sigma$ has all the symplectic eigenvalues greater than $\mu^2$ iff $\sigma\geq\pm i\mu^2\Delta$, i.e. iff $\sigma'=\sigma/\mu^2$ is a state. Then \eqref{positivity} is satisfied for any $\sigma\geq\pm i\mu^2\Delta$ iff
\begin{equation}
\mu^2 K\sigma'K^T+\alpha\geq\pm i\Delta\qquad\forall\;\sigma'\geq\pm i\Delta\;,
\end{equation}
i.e. iff $(\mu K,\;\alpha)$ satisfies \eqref{positivity} for any $\sigma\geq\pm i\Delta$.
\end{proof}
\end{lem}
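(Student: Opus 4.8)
The plan is to reduce the whole statement to an invertible linear change of variables in the covariance matrix, so that the only genuine content lies in rewriting the spectral condition ``symplectic eigenvalues greater than $\mu^2$'' as a single matrix inequality.

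First I would rephrase that spectral condition as an operator inequality. By Williamson's theorem there is a symplectic $S$ with $S\sigma S^{T}=\bigoplus_{i}\nu_i\mathbbm{1}_2$; since $S$ preserves the symplectic form ($S\Delta S^{T}=\Delta$) and congruence by the invertible $S$ leaves the inertia of a Hermitian form unchanged, the inequality $\sigma\geq\pm i\mu^2\Delta$ is equivalent to $\bigoplus_i\nu_i\mathbbm{1}_2\geq\pm i\mu^2\Delta$, which by the block-diagonal structure decouples into $\nu_i\geq\mu^2$ for every $i$. This is exactly the $\mu=1$ computation recorded in \eqref{seig}, rescaled by $\mu^2$. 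Strictness is immaterial: the requirement \eqref{positivity} is a closed condition in $\sigma$, so imposing it on the open set $\{\nu_i>\mu^2\}$ is, by continuity, the same as imposing it on the closure $\{\nu_i\geq\mu^2\}=\{\sigma\geq\pm i\mu^2\Delta\}$.

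Next I would rescale. Because $\mu^2>0$, the map $\sigma\mapsto\sigma'=\sigma/\mu^2$ is a bijection from $\{\sigma:\sigma\geq\pm i\mu^2\Delta\}$ onto the set $\{\sigma':\sigma'\geq\pm i\Delta\}$ of admissible state covariance matrices. Substituting $\sigma=\mu^2\sigma'$ into \eqref{positivity} and absorbing the scalar factor into $K$ gives
\begin{equation*}
K\sigma K^{T}+\alpha=\mu^2\,K\sigma'K^{T}+\alpha=(\mu K)\,\sigma'\,(\mu K)^{T}+\alpha\;.
\end{equation*}
Hence $K\sigma K^{T}+\alpha\geq\pm i\Delta$ holds for all $\sigma$ with symplectic eigenvalues at least $\mu^2$ if and only if $(\mu K)\,\sigma'\,(\mu K)^{T}+\alpha\geq\pm i\Delta$ holds for every $\sigma'\geq\pm i\Delta$, i.e. precisely the assertion that $(\mu K,\alpha)$ satisfies \eqref{positivity}. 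Both directions of the ``iff'' follow simultaneously because the substitution is invertible.

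The argument is essentially bookkeeping, so I do not expect a serious obstacle; the one point that needs care is the first step, namely confirming that ``all symplectic eigenvalues $\geq\mu^2$'' coincides with $\sigma\geq\pm i\mu^2\Delta$. Here the congruence step must use that $S$ is symplectic, not merely invertible, so that $\Delta$ is left fixed, and one must observe that the strict/non-strict distinction washes out by continuity of the closed condition \eqref{positivity}.
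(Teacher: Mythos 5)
Your proposal is correct and follows essentially the same route as the paper's own proof: rewrite the spectral condition as $\sigma\geq\pm i\mu^{2}\Delta$, substitute $\sigma=\mu^{2}\sigma'$, and absorb the scalar into $K$. The only difference is that you spell out the Williamson-theorem justification and the strict-versus-non-strict boundary issue, which the paper leaves implicit.
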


Then we can state the following result:

\begin{cor}
Any transformation associated with $(K,\alpha )$ satisfying
\eqref{positivity} for any state in $\mathfrak{G}_{\mu ^{2}}^{(>)}$ (i.e.
for any $\sigma \geq \pm i\mu ^{2}\Delta $) can be written as a contraction
of $1/\mu $, followed by a transformation satisfying \eqref{positivity} for
any state in $\mathfrak{G}$ (i.e. for any $\sigma \geq \pm i\Delta $).
\end{cor}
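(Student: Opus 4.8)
The plan is to write the decomposition down explicitly and then reduce its positivity to the preceding Lemma. With $|\mu|>1$, the contraction by $1/\mu$ is the map of the form \eqref{channel} with parameters $(\tfrac{1}{\mu}\mathbbm{1}_{2n},0)$ and no first-moment shift; at the level of the covariance matrix \eqref{Phicm} it acts as $\sigma\mapsto\sigma/\mu^2$. I want to produce a second map, to be applied after this contraction, whose composition with it reproduces the given transformation $(K,\alpha,\mathbf{y}_0)$.

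First I would fix the parameters of the second map by matching the action on the covariance matrix. Composing the contraction with a map of parameters $(K_2,\alpha_2)$ via \eqref{Phicm} produces $\sigma\mapsto\tfrac{1}{\mu^2}K_2\sigma K_2^T+\alpha_2$, and requiring this to coincide with $\sigma\mapsto K\sigma K^T+\alpha$ for every $\sigma$ leads to the natural choice $K_2=\mu K$ and $\alpha_2=\alpha$, the two factors of $\mu$ cancelling against $1/\mu^2$. The first-moment shift $\mathbf{y}_0$ I would carry entirely in the second map. I would then confirm on characteristic functions, using \eqref{channel}, that this is an exact identity and not merely a match of moments: the contraction sends $\chi(\mathbf{k})\mapsto\chi(\mathbf{k}/\mu)$, and the second map sends this to $\chi(K^T\mathbf{k})\,e^{-\frac14\mathbf{k}^T\alpha\mathbf{k}+i\mathbf{k}^T\mathbf{y}_0}$, since the rescaling $\mu K^T$ composed with $1/\mu$ collapses to $K^T$. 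This is precisely \eqref{channel}, so the decomposition reproduces $\Phi$ including its first moments.

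The only substantive point is to check that the second map is an admissible transformation on all of $\mathfrak{G}$, i.e. that $(\mu K,\alpha)$ satisfies \eqref{positivity} for every $\sigma\geq\pm i\Delta$. This is exactly the preceding Lemma: by hypothesis $(K,\alpha)$ satisfies \eqref{positivity} on $\mathfrak{G}_{\mu^2}^{(>)}$, that is for every $\sigma\geq\pm i\mu^2\Delta$, and the Lemma asserts this is equivalent to $(\mu K,\alpha)$ satisfying \eqref{positivity} for every $\sigma\geq\pm i\Delta$. I do not anticipate any genuine obstacle: the argument is in essence the rescaling $\sigma'=\sigma/\mu^2$ already used in the Lemma, which translates restricted positivity on $\mathfrak{G}_{\mu^2}^{(>)}$ into full positivity on $\mathfrak{G}$. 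Conceptually, the contraction lowers every state in $\mathfrak{G}_{\mu^2}^{(>)}$ into a genuine element of $\mathfrak{G}$, after which the rescaled map $(\mu K,\alpha)$ acts as a legitimate Gaussian-to-Gaussian transformation on all of $\mathfrak{G}$, and the factor $\mu^2$ supplied by the contraction is exactly what the Lemma's change of variables demands.
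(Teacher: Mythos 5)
Your proposal is correct and follows exactly the route the paper intends: the Corollary is stated as an immediate consequence of the preceding Lemma, with the decomposition $K=(\mu K)\cdot\frac{1}{\mu}\mathbbm{1}_{2n}$, $\alpha$ carried by the second map, and the Lemma guaranteeing that $(\mu K,\alpha)$ satisfies \eqref{positivity} on all of $\mathfrak{G}$. Your explicit verification on the characteristic function and covariance matrix just spells out what the paper leaves implicit.
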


\begin{figure}[t]
\begin{center}
\includegraphics[trim=0pt 0pt 0pt 0pt, clip, width=\textwidth]{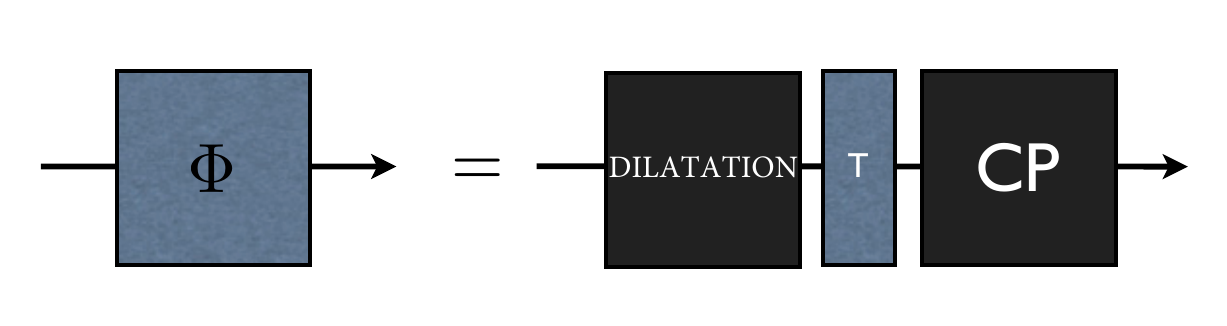}
\end{center}
\caption{(Color online) Pictorial representation of the decomposition of a
generic (not necessarily positive) Gaussian single-mode transformation $\Phi$
in terms of a dilatation, CP mapping and (possibly) a transposition. The
same decomposition applies also to the case of $n$-mode transformations when
no extra noise is added to the system, see Sec.~\protect\ref{S:MULTI}. }
\label{FigSys1}
\end{figure}

\section{One mode}

\label{S:ONE} Here we will give a complete classification of all the
one-mode maps (\ref{channel}) satisfying \eqref{positivity}.

We will need the following

\begin{lem}
A set $(K,\;\alpha)$ satisfies \eqref{positivity} iff
\begin{equation}  \label{det1}
\sqrt{\det\alpha}\geq1-|\det K|\;.
\end{equation}
\end{lem}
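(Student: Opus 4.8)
The plan is to turn the single matrix inequality \eqref{positivity}, which is really a whole family of constraints indexed by the admissible inputs $\sigma$, into a pair of scalar conditions and then collapse the family to one extremal instance. First I would use the one-mode characterization recalled in \eqref{state1}: for a real symmetric $2\times2$ matrix $\tau$ one has $\tau\ge\pm i\Delta$ if and only if $\tau\ge0$ and $\det\tau\ge1$. Hence \eqref{positivity} is equivalent to requiring, for every admissible $\sigma$, that $K\sigma K^{T}+\alpha\ge0$ and $\det(K\sigma K^{T}+\alpha)\ge1$. Next I would parametrize the inputs by Williamson's theorem together with \eqref{det1mode}: every $\sigma\ge\pm i\Delta$ can be written $\sigma=\nu\,MM^{T}$ with $M$ symplectic ($\det M=1$) and $\nu=\sqrt{\det\sigma}\ge1$, so that $K\sigma K^{T}=\nu B$ with $B:=(KM)(KM)^{T}\ge0$ and $\det B=(\det K)^{2}$. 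As $M$ ranges over the one-mode symplectic group and $\nu$ over $[1,\infty)$, the matrix $B$ ranges over all positive-semidefinite matrices of determinant $(\det K)^{2}$; crucially, the reconstruction $M=K^{-1}C$ with $CC^{T}=B$ and the sign of $\det C$ fixed by an orthogonal reflection shows every such $B$ genuinely arises from an admissible $\sigma$. Thus \eqref{positivity} reduces to: for all $\nu\ge1$ and all such $B$, $\;\nu B+\alpha\ge0$ and $\det(\nu B+\alpha)\ge1$.

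I would then dispose of the parameter $\nu$ and invoke the Minkowski determinant inequality. On the region where $\nu B+\alpha\ge0$, Jacobi's formula gives $\tfrac{d}{d\nu}\det(\nu B+\alpha)=\mathrm{tr}\!\big(\mathrm{adj}(\nu B+\alpha)\,B\big)\ge0$, because the adjugate of a $2\times2$ positive matrix is positive and $B\ge0$; so $\det(\nu B+\alpha)$ is nondecreasing in $\nu$ and the binding case is $\nu=1$. It then remains to minimize $\det(B+\alpha)$ over $B\ge0$ with $\det B=(\det K)^{2}$. The Minkowski inequality for $2\times2$ positive matrices, $\sqrt{\det(B+\alpha)}\ge\sqrt{\det B}+\sqrt{\det\alpha}=|\det K|+\sqrt{\det\alpha}$, supplies the lower bound, and taking $B$ proportional to $\alpha$ (or, when $\alpha$ or $K$ is singular, a squeezed sequence aligned with $\alpha$) shows the bound is attained or approached. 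Hence the infimum of $\det(B+\alpha)$ equals $(|\det K|+\sqrt{\det\alpha})^{2}$, and the determinant constraint holds for every input exactly when $(|\det K|+\sqrt{\det\alpha})^{2}\ge1$, i.e. precisely when \eqref{det1} holds. This settles the reverse implication once $\alpha\ge0$ is known, since then $\det\alpha\ge0$ makes the square root meaningful and $\nu B+\alpha\ge0$ is automatic, giving $\det(\nu B+\alpha)\ge(\nu|\det K|+\sqrt{\det\alpha})^{2}\ge(|\det K|+\sqrt{\det\alpha})^{2}\ge1$.

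The step I expect to be the main obstacle, and which must be addressed separately, is showing that \eqref{positivity} forces $\alpha\ge0$ — so that the positivity half of the output condition is not an additional constraint and the square root in \eqref{det1} is legitimate (note that $\det\alpha\ge0$ alone does not suffice, as a negative-definite $\alpha$ can have positive determinant). For this I would again use that $B$ can be squeezed arbitrarily small along any prescribed direction while keeping $\det B=(\det K)^{2}$ fixed: for a unit vector $v$ pick $B$ with $v^{T}Bv\to0$; then $v^{T}(B+\alpha)v\to v^{T}\alpha v$, and since each $B+\alpha\ge0$ we conclude $v^{T}\alpha v\ge0$ for all $v$, i.e. $\alpha\ge0$ (the case $\det K=0$ gives rank-$\le1$ matrices $B$ for which the same squeezing applies, and $K=0$ reduces to $\alpha\ge\pm i\Delta$). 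The only remaining delicate points are the tightness of the Minkowski bound in the degenerate cases $\det\alpha=0$ or $\det K=0$, where the extremal value is approached but not attained; these are handled by the explicit Williamson parametrization above, which guarantees the minimizing sequence consists of genuine admissible covariance matrices.
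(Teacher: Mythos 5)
Your proposal follows essentially the same route as the paper: reduce \eqref{positivity} to the scalar condition $\det(K\sigma K^{T}+\alpha)\ge 1$ via \eqref{state1}, obtain necessity of \eqref{det1} by evaluating at the extremal input with $K\sigma K^{T}$ proportional to $\alpha$ (with a squeezed limiting sequence when $\det K=0$ or $\det\alpha=0$), and obtain sufficiency from Minkowski's determinant inequality \eqref{mink}. The only substantive difference is that you explicitly flag and prove that \eqref{positivity} forces $\alpha\ge 0$ before invoking Minkowski --- a point the paper leaves implicit here (it is established separately, and only later, in the multi-mode section) and which is genuinely needed for the ``if'' direction, since a negative-definite $\alpha$ can satisfy \eqref{det1} while violating \eqref{positivity}; your treatment is therefore slightly more careful but not a different proof.
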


\begin{proof}
For one mode, $\sigma\geq0$ satisfies $\sigma\geq\pm i\Delta$ iff $\det\sigma\geq1$, and condition \eqref{positivity} can be rewritten as
\begin{equation}
\det \left( K\sigma K^{T}+\alpha \right) \geq 1,\qquad \forall \;\sigma \geq
0,\;\det \sigma \geq 1\;.  \label{posdet}
\end{equation}
To prove (\ref{posdet}) $\Longrightarrow $ (\ref{det1}) consider first the
case $\det K\neq 0.$ Choosing $\sigma $ such that $K\sigma K^{T}=\frac{|\det
K|}{\sqrt{\det \alpha }}\alpha ,$ we have $\sigma \geq 0,\det \sigma \geq 1.$
Inserting this into (\ref{posdet}), we obtain
\begin{equation*}
\left( 1+\frac{|\det K|}{\sqrt{\det \alpha }}\right) ^{2}\det \alpha \geq 1
\end{equation*}
or, taking square root,
\begin{equation*}
\left( 1+\frac{|\det K|}{\sqrt{\det \alpha }}\right) \sqrt{\det \alpha }\geq
1\;.
\end{equation*}
hence (\ref{det1}) follows.

If $\det K=0,$ then there is a unit vector $\mathbf{e}$ such that $K\mathbf{e}=0$. Choose $
\sigma =\epsilon ^{-1}\mathbf{e}\mathbf{e}^{T}+\epsilon \mathbf{e}_{1}\mathbf{e}_{1}^{T},$ where $\epsilon >0,$
and $\mathbf{e}_{1}$ is a unit vector orthogonal to $\mathbf{e}$. Then $\sigma \geq 0,\det
\sigma =1,$ and $K\sigma K^{T}=\epsilon A,$ where $A=K\mathbf{e}_{1}\mathbf{e}_{1}^{T}K^{T}
\geq 0$. Inserting this into (\ref{posdet}), we obtain
\begin{equation*}
\det \left( \epsilon A+\alpha \right) \geq 1,\quad \forall \;\epsilon \geq 0,
\end{equation*}
hence (\ref{det1}) follows.

To prove (\ref{det1}) $\Longrightarrow $  (\ref{posdet}), we use Minkowski's
determinant inequality
\begin{equation}
\sqrt{\det (A+B)}\geq \sqrt{\det A}+\sqrt{\det B}\qquad \forall \;A,\,B\geq
0\;.  \label{mink}
\end{equation}
We have for all $\sigma \geq 0,\det \sigma \geq 1,$
\begin{eqnarray}
&&\sqrt{\det \left( K\sigma K^{T}+\alpha \right) }\geq   \notag \\
&\geq &\left\vert \det K\right\vert \sqrt{\det \sigma }+\sqrt{\det \alpha }
\geq   \notag \\
&\geq &\left\vert \det K\right\vert +\sqrt{\det \alpha }\;\geq 1,
\label{ineq}
\end{eqnarray}
where in the last step we have used (\ref{det1}).
\end{proof}
To compare transformations satisfying \eqref{det1} with CP ones, we need also

\begin{lem}
A set $(K,\;\alpha)$ characterizes a completely positive transformation
(i.e. satisfies \eqref{CPTP}) iff
\begin{equation}
\sqrt{\det\alpha}\geq|1-\det K|\;.  \label{det2}
\end{equation}
\end{lem}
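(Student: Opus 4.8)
The plan is to reduce the operator inequality \eqref{CPTP} to the scalar statement \eqref{det2} by exploiting the one-mode collapse of the symplectic form. First I would specialize $\Delta_K = K\Delta K^T$ using the $2\times2$ identity \eqref{det1mode}, which gives $K\Delta K^T = \Delta\det K$. Hence the ``defect'' simplifies to $\Delta - \Delta_K = (1-\det K)\,\Delta$, and the complete-positivity condition \eqref{CPTP} becomes the single two-sided constraint
\begin{equation*}
\alpha \geq \pm\, i\,(1-\det K)\,\Delta\;.
\end{equation*}
Writing $c \equiv 1-\det K$, the whole problem is thus to decide when the real symmetric (and, as throughout, positive semidefinite) matrix $\alpha$ dominates $\pm ic\,\Delta$.

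Next I would analyze the Hermitian matrices $\alpha \mp ic\,\Delta$ directly. Since $i\Delta$ is Hermitian, these are Hermitian for real $c$, and a $2\times2$ Hermitian matrix is positive semidefinite iff its trace and determinant are both nonnegative. A short computation gives $\det\!\left(\alpha \mp ic\,\Delta\right) = \det\alpha - c^{2}$ and $\mathrm{tr}\!\left(\alpha \mp ic\,\Delta\right) = \mathrm{tr}\,\alpha$, independently of the chosen sign. Therefore both inequalities $\alpha \geq \pm ic\,\Delta$ hold simultaneously iff $\mathrm{tr}\,\alpha \geq 0$ and $\det\alpha \geq c^{2} = (1-\det K)^{2}$.

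Finally I would dispose of the trace condition and take the square root. Adding the two matrix inequalities $\alpha \geq ic\,\Delta$ and $\alpha \geq -ic\,\Delta$ yields $2\alpha \geq 0$, so \eqref{CPTP} already forces $\alpha \geq 0$ and in particular $\mathrm{tr}\,\alpha \geq 0$; conversely, for the positive semidefinite noise matrices $\alpha$ under consideration the trace is automatically nonnegative and $\det\alpha \geq 0$. The surviving content is then exactly $\det\alpha \geq (1-\det K)^{2}$, which upon taking square roots (legitimate since the right-hand side is a perfect square) is the claimed condition $\sqrt{\det\alpha} \geq |1-\det K|$.

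The step I expect to be the crux is the very first one: recognizing that \eqref{det1mode} turns the antisymmetric defect $\Delta - \Delta_K$ into a scalar multiple of $\Delta$, so that the genuinely operator-valued constraint \eqref{CPTP} reduces to a single determinant inequality. Once that reduction is in place, the remaining $2\times2$ positivity computation is routine and mirrors the structure of the preceding lemma; the only contrast is that \eqref{CPTP} constrains $\alpha$ against the defect $(1-\det K)\,\Delta$ directly, which produces the bound $|1-\det K|$ in \eqref{det2}, as opposed to the bound $1-|\det K|$ in \eqref{det1} that arose from optimizing \eqref{positivity} over all admissible $\sigma$ via Minkowski's inequality.
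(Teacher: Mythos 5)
Your proposal is correct and follows essentially the same route as the paper: both reduce \eqref{CPTP} to $\alpha\geq\pm i(1-\det K)\Delta$ via the one-mode identity \eqref{det1mode} and then collapse this to the determinant inequality $\det\alpha\geq(1-\det K)^{2}$. The only difference is cosmetic: where the paper invokes \eqref{state1} together with a linearity/rescaling argument, you verify the same $2\times 2$ Hermitian positivity directly through the trace and determinant (and you explicitly note that the trace condition is subsumed by $\alpha\geq 0$), which is a slightly more self-contained justification of the same step.
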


\begin{proof}
For one mode, using \eqref{det1mode},
\begin{equation}
\Delta_K=K\Delta K^T=\det K\;\Delta\;,
\end{equation}
and \eqref{CPTP} becomes
\begin{equation}\label{CPTPdet}
\alpha\geq\pm i(1-\det K)\Delta\;.
\end{equation}
Recalling \eqref{state1}, for linearity \eqref{CPTPdet} becomes exactly
\begin{equation}
\det\alpha\geq(1-\det K)^2\;.
\end{equation}
\end{proof}
We recall here that a complete classification of single mode CP
maps has been provided in Refs.~\cite{HOLEVO1,HOLEVO2}.

We are now ready to prove the main result of this section.

\begin{thm}
\label{thm:1mode} Any map $\Phi $ satisfying \eqref{positivity} can be
written as a dilatation possibly composed with the transposition, followed
by a completely positive map. In more detail, given a pair $(K,\;\alpha )$
satisfying \eqref{positivity},

\begin{description}
\item[a1] If
\begin{equation}
0\leq\det K\leq1\;,
\end{equation}
$\Phi$ is completely positive.

\item[a2] If
\begin{equation}
\det K>1\;,
\end{equation}
$\Phi$ can be written as a phase-space dilatation of parameter $\lambda=\sqrt{\det K}>1$, composed with
the symplectic transformation given by
\begin{equation}
S=\frac{K}{\sqrt{\det K}}\;,
\end{equation}
composed with the addition of Gaussian noise given by $\alpha$.

\item[b1] If
\begin{equation}
-1\leq\det K<0\;,
\end{equation}
$\Phi$ can be written as a transposition composed with a completely positive
map.

\item[b2] If
\begin{equation}
\det K<-1\;,
\end{equation}
$\Phi$ can be written as a dilatation of $\sqrt{|\det K|}$ composed with the
transposition, followed by the symplectic transformation given by
\begin{equation}
S=\frac{K}{\sqrt{|\det K|}}\;,
\end{equation}
composed with the addition of Gaussian noise given by $\alpha$.
\end{description}
\end{thm}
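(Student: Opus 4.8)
The plan is to reduce the whole statement to the two determinant criteria already proved above: a pair $(K,\alpha)$ satisfies \eqref{positivity} exactly when \eqref{det1} holds, $\sqrt{\det\alpha}\geq 1-|\det K|$, while it is completely positive exactly when \eqref{det2} holds, $\sqrt{\det\alpha}\geq|1-\det K|$. The entire argument then amounts to stripping off a dilatation and/or the transposition from the front of $\Phi$ and checking that the residual pair satisfies \eqref{det2}. Throughout I would use that composing the covariance-level maps $(K_1,\alpha_1)$ and $(K_2,\alpha_2)$ produces $(K_2K_1,\,K_2\alpha_1K_2^T+\alpha_2)$, that a dilatation by $\lambda$ is the pair $(\lambda\mathbbm 1_2,0)$, that the transposition is $(T_2,0)$ with $\det T_2=-1$, and that $\alpha\geq 0$, so that the noise-adding factors are genuine CP channels and all square roots are real.

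The key structural observation, which collapses the four cases into one mechanism, is that the transposition flips the sign of $\det K$ while a dilatation rescales $|\det K|$. Concretely, after possibly composing with $T_2$ I may assume the effective determinant is nonnegative, and then two regimes appear. If $|\det K|\leq 1$, the right-hand side of \eqref{det1} is $1-|\det K|\geq 0$, which equals $\bigl|1-|\det K|\bigr|$, so \eqref{det1} and \eqref{det2} literally coincide and the residual map is already CP with no dilatation needed; this produces cases \textbf{a1} and \textbf{b1}. If $|\det K|>1$, the right-hand side of \eqref{det1} is negative and imposes nothing beyond $\det\alpha\geq 0$, so I factor out a dilatation of parameter $\lambda=\sqrt{|\det K|}>1$ to bring the effective determinant down to $1$; the leftover pair then has determinant $1$ (a symplectic transformation) together with the noise $\alpha$, and satisfies \eqref{det2} trivially since $|1-1|=0\leq\sqrt{\det\alpha}$. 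This produces cases \textbf{a2} and \textbf{b2}.

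Carrying this out case by case: in \textbf{a1}, for $0\leq\det K\leq 1$ the two bounds are identical, so $\Phi$ is itself CP. In \textbf{a2} I set $\lambda=\sqrt{\det K}$ and $S=K/\lambda$, check $\det S=1$, and verify through the composition rule that $\Phi=(\text{noise }\alpha)\circ(S,0)\circ(\lambda\mathbbm 1_2,0)$ reproduces $(K,\alpha)$, the last two factors forming the CP map. In \textbf{b1} I factor out $T_2$: the residual pair $(KT_2,\alpha)$ has $\det(KT_2)=-\det K\in(0,1]$, and its CP threshold $|1-\det(KT_2)|=1+\det K$ coincides with the positivity threshold $1-|\det K|=1+\det K$, so it is CP. In \textbf{b2} I factor out both a dilatation of parameter $\lambda=\sqrt{|\det K|}>1$ and the transposition; the total determinant bookkeeping $\det K=\det(K_{\mathrm{res}})\,\lambda^2\,(-1)$ forces the residual $\det(K_{\mathrm{res}})=1$, leaving a symplectic transformation plus noise, again CP by the same triviality.

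The step demanding the most care is \textbf{b1} together with its boundary against \textbf{a1}: everything hinges on the identity that, once $\det K<0$ is turned positive by the transposition, the positivity threshold $1-|\det K|$ and the complete-positivity threshold $\bigl|1-\det K\bigr|$ collapse to the common value $1+\det K$. Tracking the absolute values and the sign of $\det K$ correctly across the transposition is where the reasoning could silently fail, and it is exactly this coincidence on the range $|\det K|\leq 1$ that explains why no dilatation is required there. The rest is pure bookkeeping: confirming in each case that the proposed concatenation, read through $(K_2K_1,K_2\alpha_1K_2^T+\alpha_2)$, returns precisely the original $(K,\alpha)$, and that every factored-off dilatation indeed has $|\lambda|>1$, so that it is an admissible map of the form \eqref{MAP}.
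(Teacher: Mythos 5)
Your proposal is correct and follows essentially the same route as the paper: it rests on the two determinant lemmas \eqref{det1} and \eqref{det2}, handles $0\leq\det K\leq1$ by noting the two criteria coincide, factors out a dilatation of parameter $\sqrt{|\det K|}$ when $|\det K|>1$ leaving a symplectic matrix plus the noise $\alpha$, and reduces $\det K<0$ to the positive case by stripping off the transposition $T_2$. The only difference is presentational (you verify the residual factor is CP by checking \eqref{det2} explicitly, where the paper simply exhibits the decomposition), so there is nothing substantive to add.
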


\begin{proof}
\begin{description}
\item[a] Let us start from the case
\begin{equation}
\det K\geq0\;.
\end{equation}
\begin{description}
\item[a1] If
\begin{equation}
0\leq\det K\leq1\;,
\end{equation}
\eqref{det1} and \eqref{det2} coincide, so $\Phi$ is completely positive.
\item[a2] If
\begin{equation}
\det K>1\;,
\end{equation}
we can write $K$ as
\begin{equation}
K=S\;\sqrt{\det K}\mathbbm{1}_2\;,
\end{equation}
where
\begin{equation}
S=\frac{K}{\sqrt{\det K}}
\end{equation}
is symplectic since $\det S=1$.
Then $\Phi$ can be written as a dilatation of $\sqrt{\det K}>1$, followed by the symplectic transformation given by $S$, composed with the addition of the Gaussian noise given by $\alpha$.
\end{description}
\item[b] If
\begin{equation}
\det K<0\;,
\end{equation}
we can write $K$ as
\begin{equation}
K=K'T\;,
\end{equation}
where $T$ is the one-mode transposition
\begin{equation}
T=\left(
    \begin{array}{cc}
      1 &  \\
       & -1 \\
    \end{array}
  \right)\;,
\end{equation}
and
\begin{equation}
\det K'=-\det K>0\;.
\end{equation}
From \eqref{det1} we can see that also $K'$ satisfies
\begin{equation}
\sqrt{\det\alpha}\geq1-\left|\det K'\right|\;,
\end{equation}
and we can exploit the classification with positive determinant, ending with the same decomposition with the addition of the transposition after (or before, since they commute) the eventual dilatation.
\end{description}
\end{proof}

\section{Multi-mode case}

\label{S:MULTI} In the multi-mode case, a classification as simple as the
one of theorem \ref{thm:1mode} does not exist. However, we will prove that
if $\Phi$ does not add any noise, i.e $\alpha=0$, the only solution to
\eqref{positivity} is a dilatation possibly composed with a (total)
transposition, followed by a symplectic transformation. We will also provide
examples that do not fall in any classification like \ref{thm:1mode}, i.e.
that are not composition of a dilatation, possibly followed by a (total)
transposition, and a completely positive map.

We will need the following lemma:

\begin{lem}
\label{lem:inf}
\begin{equation}
\inf_{\sigma\geq\pm i\Delta}\mathbf{w}^\dag\sigma\mathbf{w}=\left|\mathbf{w}
^\dag\Delta\mathbf{w}\right|\qquad\forall\;\;\mathbf{w}\in\mathbb{C}^{2n}\;.
\label{inf}
\end{equation}
\begin{proof}
~\paragraph{Lower bound}
The lower bound for the LHS is straightforward: for any $\sigma\geq\pm i\Delta$ and $\mathbf{w}\in\mathbb{C}^{2n}$ we have
\begin{equation}
\mathbf{w}^\dag\sigma\mathbf{w}\geq\pm i\mathbf{w}^\dag\Delta\mathbf{w}\;,
\end{equation}
and then
\begin{equation}
\inf_{\sigma\geq\pm i\Delta}\mathbf{w}^\dag\sigma\mathbf{w}\geq\left|\mathbf{w}^\dag\Delta\mathbf{w}\right|\;.
\end{equation}
\paragraph{Upper bound}
To prove the converse, let
\begin{equation*}
\mathbf{w}=\mathbf{w}_{1}+i\mathbf{w}_{2}\;,\qquad \mathbf{w}_{i}\in \mathbb{
R}^{2n}\;,
\end{equation*}
where without lost of generality we assume $\mathbf{w}_1\neq\mathbf{0}$.
Then
\begin{equation*}
\mathbf{w}^{\dag }\sigma \mathbf{w}=\mathbf{w}_{1}^{T}\sigma \mathbf{w}_{1}+\mathbf{w}_{2}^{T}\sigma
\mathbf{w}_{2},\quad \left\vert \mathbf{w}^{\dag }\Delta \mathbf{w}\right\vert
=2\left\vert \mathbf{w}_{1}^{T}\Delta \mathbf{w}_{2}\right\vert .
\end{equation*}
Assume first $\mathbf{w}_{1}^{T}\Delta \mathbf{w}_{2}\equiv \epsilon \neq0.$ Then we can
introduce the symplectic basis $\{\mathbf{e}_{j},\;\mathbf{h}_{j}\}_{j=1,\dots ,n}$, where
\begin{equation*}
\mathbf{e}_{1}=\frac{\mathbf{w}_{1}}{\sqrt{|\epsilon|}}\;,\qquad\mathbf{h}_{1}=\frac{\mathrm{sign}(\epsilon)\;\mathbf{w}_{2}}{\sqrt{|\epsilon|}}\;.
\end{equation*}
Expressed in this basis the question
\eqref{inf} reduces to the first mode, and the infimum is attained by the
matrix of the form
\begin{equation*}
\sigma =\left(
\begin{array}{cc}
1 & 0 \\
0 & 1
\end{array}
\right) \oplus \sigma _{n-1}\;,
\end{equation*}
 where $\sigma _{n-1}$ is any quantum correlation matrix in
the rest $n-1$ modes.

Consider next the case where $\mathbf{w}_{1}^{T}\Delta \mathbf{w}_{2}=0$ and $\mathbf{w}_{2}$ is not proportional to $\mathbf{w}_{1}$. In this context
we  introduce the symplectic basis $\{\mathbf{e}_{j},\;\mathbf{h}_{j}\}_{j=1,\dots ,n}$,
where
\begin{equation*}
\mathbf{e}_{1}=\mathbf{w}_{1}\;,\qquad \mathbf{e}_{2}=\mathbf{w}_{2}\;.
\end{equation*}
Accordingly the identity~\eqref{inf} reduces to the
first two modes, and the infimum is attained by the matrices of the form
\begin{equation*}
\sigma (\epsilon )=\left(
\begin{array}{cc}
\epsilon  & 0 \\
0 & \epsilon ^{-1}
\end{array}
\right) \oplus \left(
\begin{array}{cc}
\epsilon  & 0 \\
0 & \epsilon ^{-1}
\end{array}
\right) \oplus \sigma _{n-2}\;,
\end{equation*}
where $\sigma _{n-2}$ is any quantum correlation matrix in the rest $n-2$
modes, and $\epsilon \rightarrow 0$.

{Finally, if $\mathbf{w}_{2}=c\;\mathbf{w}_1,\;c\in\mathbb{R}$, we  introduce the symplectic basis $\{\mathbf{e}_{j},\;\mathbf{h}_{j}\}_{j=1,\dots ,n}$, where $\mathbf{e}_{1}=\mathbf{w}_{1}$. The question \eqref{inf} reduces to the
first mode, and the infimum is attained by the matrices of the form
\begin{equation*}
\sigma (\epsilon )=\left(
\begin{array}{cc}
\epsilon  & 0 \\
0 & \epsilon ^{-1}
\end{array}
\right) \oplus \sigma _{n-1}\;,
\end{equation*}
where $\sigma _{n-1}$ is any quantum correlation matrix in the rest $n-1$
modes, and $\epsilon \rightarrow 0$.}
\end{proof}
\end{lem}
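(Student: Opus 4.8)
The plan is to prove the two matching inequalities separately: the lower bound is immediate from the defining constraint, while the real work lies in the upper bound, which I would obtain by exhibiting, for each fixed $\mathbf{w}$, admissible covariance matrices $\sigma$ whose value of $\mathbf{w}^\dag\sigma\mathbf{w}$ attains or tends to $|\mathbf{w}^\dag\Delta\mathbf{w}|$. The first step in either direction is to split $\mathbf{w}=\mathbf{w}_1+i\mathbf{w}_2$ with $\mathbf{w}_1,\mathbf{w}_2\in\mathbb{R}^{2n}$. Since $\sigma$ is real symmetric and $\Delta$ real skew-symmetric, the cross terms cancel and one finds $\mathbf{w}^\dag\sigma\mathbf{w}=\mathbf{w}_1^T\sigma\mathbf{w}_1+\mathbf{w}_2^T\sigma\mathbf{w}_2$, together with $\mathbf{w}^\dag\Delta\mathbf{w}=2i\,\mathbf{w}_1^T\Delta\mathbf{w}_2$, the latter being purely imaginary. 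Setting $\epsilon\equiv\mathbf{w}_1^T\Delta\mathbf{w}_2$, the claim reduces to showing $\inf_\sigma(\mathbf{w}_1^T\sigma\mathbf{w}_1+\mathbf{w}_2^T\sigma\mathbf{w}_2)=2|\epsilon|$.

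For the lower bound I would use that $i\Delta$ is Hermitian, so $\sigma\geq\pm i\Delta$ gives $\mathbf{w}^\dag\sigma\mathbf{w}\geq\pm i\,\mathbf{w}^\dag\Delta\mathbf{w}$ for both signs; as the left-hand side is real and $i\,\mathbf{w}^\dag\Delta\mathbf{w}$ is real, these two inequalities combine into $\mathbf{w}^\dag\sigma\mathbf{w}\geq|\mathbf{w}^\dag\Delta\mathbf{w}|$, which bounds the infimum from below. This is the easy half.

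The substance is the upper bound, which I would reach by passing to a symplectic basis adapted to $(\mathbf{w}_1,\mathbf{w}_2)$; this is legitimate because conjugating a covariance matrix by a symplectic matrix preserves both $\Delta$ and the constraint $\sigma\geq\pm i\Delta$, so I may freely compute in whatever symplectic frame is convenient. Assuming $\mathbf{w}_1\neq\mathbf{0}$ (one may swap the real and imaginary parts otherwise, and $\mathbf{w}=\mathbf{0}$ is trivial), I would distinguish three geometric cases. When $\epsilon\neq0$ the rescaled vectors $\mathbf{e}_1=\mathbf{w}_1/\sqrt{|\epsilon|}$ and $\mathbf{h}_1=\mathrm{sign}(\epsilon)\,\mathbf{w}_2/\sqrt{|\epsilon|}$ form a canonical conjugate pair; completing them to a symplectic basis and taking $\sigma=\mathbbm{1}_2\oplus\sigma_{n-1}$ yields $\mathbf{w}_1^T\sigma\mathbf{w}_1+\mathbf{w}_2^T\sigma\mathbf{w}_2=|\epsilon|\,(\mathbf{e}_1^T\sigma\mathbf{e}_1+\mathbf{h}_1^T\sigma\mathbf{h}_1)=2|\epsilon|$, so the infimum is in fact attained. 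When $\epsilon=0$ and $\mathbf{w}_2$ is not proportional to $\mathbf{w}_1$, the two vectors span an isotropic plane and can serve as position coordinates $\mathbf{e}_1,\mathbf{e}_2$ of two distinct modes; squeezing these modes by $\sigma=\mathrm{diag}(\delta,\delta^{-1})\oplus\mathrm{diag}(\delta,\delta^{-1})\oplus\sigma_{n-2}$ and letting $\delta\to0$ sends the value to $0=2|\epsilon|$. Finally, when $\mathbf{w}_2=c\,\mathbf{w}_1$ the problem collapses to a single mode and the same squeezing gives $(1+c^2)\,\delta\to0$. In each case the matrices used are genuine covariance matrices, since $\mathbbm{1}_2$ and $\mathrm{diag}(\delta,\delta^{-1})$ have symplectic eigenvalue $1$ and the direct-sum structure in a symplectic frame preserves the constraint.

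The main obstacle I anticipate is geometric bookkeeping rather than estimation: one must correctly classify $\mathrm{span}\{\mathbf{w}_1,\mathbf{w}_2\}$ relative to $\Delta$ (a symplectic plane when $\epsilon\neq0$, an isotropic plane or line when $\epsilon=0$) and check that the chosen vectors really extend to a full symplectic basis. This hinges on the non-degeneracy of $\Delta$, which guarantees that any independent isotropic set completes to a symplectic basis, and on the elementary fact that for a single mode the trace $\mathbf{e}^T\sigma\mathbf{e}+\mathbf{h}^T\sigma\mathbf{h}$ is minimized at the value $2$ by the vacuum $\sigma=\mathbbm{1}_2$ (equivalently $\det\sigma\geq1$ forces $\mathrm{tr}\,\sigma\geq2$). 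Once these structural points are secured, the remaining computations are routine.
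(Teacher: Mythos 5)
Your proposal is correct and follows essentially the same route as the paper's own proof: the lower bound from the defining operator inequality, and the upper bound via the decomposition $\mathbf{w}=\mathbf{w}_1+i\mathbf{w}_2$ and the same three-case analysis (symplectic pair when $\mathbf{w}_1^T\Delta\mathbf{w}_2\neq0$, isotropic plane, isotropic line) with the identical choices of attaining or squeezing covariance matrices. The extra justifications you supply (cancellation of cross terms, completion of isotropic sets to a symplectic basis, $\det\sigma\geq1\Rightarrow\mathrm{tr}\,\sigma\geq2$) are details the paper leaves implicit, not a different argument.
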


A simple consequence of lemma \ref{lem:inf} is

\begin{lem}
Any $\alpha$ satisfying \eqref{positivity} for some $K$ is positive
semidefinite.
\end{lem}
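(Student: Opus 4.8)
The plan is to reduce the matrix inequality \eqref{positivity} to a family of scalar inequalities by testing it against vectors $\mathbf{w}\in\mathbb{C}^{2n}$, and then to use Lemma \ref{lem:inf} to carry out the minimization over $\sigma$ explicitly, thereby eliminating the hypothesis variable. The crucial observation, saved for the last step, is that on \emph{real} test vectors every antisymmetric quadratic form vanishes, which collapses the resulting bound into exactly the statement $\alpha\geq 0$.

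Concretely, I would first note that, since $i\Delta$ is Hermitian and $\alpha$, $\Delta_K\equiv K\Delta K^{T}$ are real, a matrix inequality of the form $M\geq\pm i\Delta$ for real symmetric $M$ is equivalent to $\mathbf{w}^{\dag}M\mathbf{w}\geq\left|\mathbf{w}^{\dag}\Delta\mathbf{w}\right|$ for all $\mathbf{w}\in\mathbb{C}^{2n}$ (the two signs combine into the absolute value, and all the relevant quadratic forms are real). Thus \eqref{positivity} says precisely that, for every $\sigma\geq\pm i\Delta$ and every $\mathbf{w}$,
\begin{equation*}
\mathbf{w}^{\dag}K\sigma K^{T}\mathbf{w}+\mathbf{w}^{\dag}\alpha\mathbf{w}\geq\left|\mathbf{w}^{\dag}\Delta\mathbf{w}\right|\;.
\end{equation*}
Fixing $\mathbf{w}$ and writing $\mathbf{v}=K^{T}\mathbf{w}$, the term $\mathbf{w}^{\dag}K\sigma K^{T}\mathbf{w}=\mathbf{v}^{\dag}\sigma\mathbf{v}$ is the only $\sigma$-dependent piece, so passing to the infimum over $\sigma\geq\pm i\Delta$ and invoking Lemma \ref{lem:inf} yields
\begin{equation*}
\left|\mathbf{w}^{\dag}\Delta_{K}\mathbf{w}\right|+\mathbf{w}^{\dag}\alpha\mathbf{w}\geq\left|\mathbf{w}^{\dag}\Delta\mathbf{w}\right|\;,
\end{equation*}
since $\inf_{\sigma}\mathbf{v}^{\dag}\sigma\mathbf{v}=\left|\mathbf{v}^{\dag}\Delta\mathbf{v}\right|=\left|\mathbf{w}^{\dag}\Delta_{K}\mathbf{w}\right|$.

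The final step is to specialize $\mathbf{w}$ to a real vector. Because $\Delta$ is skew-symmetric, $\mathbf{w}^{T}\Delta\mathbf{w}=0$, and likewise $\mathbf{w}^{T}\Delta_{K}\mathbf{w}=(K^{T}\mathbf{w})^{T}\Delta(K^{T}\mathbf{w})=0$ as $K^{T}\mathbf{w}$ is again real. The displayed inequality then reduces to $\mathbf{w}^{T}\alpha\mathbf{w}\geq 0$ for all real $\mathbf{w}$, and since $\alpha$ is real symmetric this is exactly $\alpha\geq 0$. I do not expect a genuine obstacle here: the argument is essentially the antisymmetry trick combined with Lemma \ref{lem:inf}. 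The only point requiring a little care is the legitimacy of exchanging the fixed $\alpha$-term with the infimum over $\sigma$ and of reducing the Hermitian matrix inequality to its scalar form, but both are routine once one records that all the relevant quadratic forms take real values.
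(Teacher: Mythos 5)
Your argument is correct and is essentially the paper's own proof: both reduce \eqref{positivity} to scalar inequalities, apply Lemma \ref{lem:inf} to eliminate the infimum over $\sigma$, and conclude via the vanishing of the antisymmetric forms $\mathbf{w}^{T}\Delta\mathbf{w}$ and $\mathbf{w}^{T}\Delta_{K}\mathbf{w}$ on real vectors. The only cosmetic difference is that the paper restricts to real test vectors before taking the infimum rather than after.
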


\begin{proof}
The constraint \eqref{positivity} implies
\begin{equation}
\left(K^T\mathbf{k}\right)^T\sigma\left(K^T\mathbf{k}\right)+\mathbf{k}^T\alpha\mathbf{k}\geq0\label{kpos}
\end{equation}
for any $\sigma\geq\pm i\Delta$ and $\mathbf{k}\in\mathbb{R}^{2n}$.
Taking the inf over $\sigma\geq\pm i\Delta$, and exploiting lemma \ref{lem:inf} with $\mathbf{w}=K^T\mathbf{k}$, we get
\begin{equation}\label{kak}
\mathbf{k}^T\alpha\mathbf{k}\geq0\;,
\end{equation}
i.e. $\alpha$ is positive semidefinite.
In deriving \eqref{kak} we have used that, since $\Delta$ is antisymmetric, $\mathbf{k}\Delta\mathbf{k}^T=0$ for any real $\mathbf{k}$.
\end{proof}

The Lemma \ref{lem:inf} allows us to rephrase the problem: indeed, the constraint
\eqref{positivity} can be written as
\begin{equation}
(K^T\mathbf{w})^\dag\sigma(K^T\mathbf{w})+\mathbf{w}^\dag\alpha\mathbf{w}
\geq\left|\mathbf{w}^\dag\Delta\mathbf{w}\right|\;,
\end{equation}
$\forall\;\sigma\geq\pm i\Delta$, $\forall \mathbf{w}\in\mathbb{C}^{2n}$.
Taking the inf over $\sigma$ in the LHS we hence get
\begin{equation}  \label{posw}
\left|\mathbf{w}^\dag\Delta_K\mathbf{w}\right|+\mathbf{w}^\dag\alpha\mathbf{w
}\geq\left|\mathbf{w}^\dag\Delta\mathbf{w}\right|\;,\quad\forall\;\mathbf{w}
\in\mathbb{C}^{2n}\;,
\end{equation}
with $\Delta_K$ as in Eq.~(\ref{DELTAK}). Notice that, as for the complete
positivity constraint \eqref{CPTP}, since $K$ enters in \eqref{posw} only
through $\left|\mathbf{w}^\dag\Delta_K\mathbf{w}\right|$, whether given $K$
and $\alpha$ satisfy \eqref{positivity} depends not on the entire $K$ but
only on $\Delta_K$.

The easiest way to give a general classification of the channels satisfying
\eqref{posw} (and then \eqref{positivity}) would seem choosing a basis in
which $\Delta$ is as in \eqref{deltacan}, and then try to put the
antisymmetric matrix $\Delta_K$ in some canonical form using symplectic
transformations preserving $\Delta$. However, the complete classification of
antisymmetric matrices under symplectic transformations is very involved
\cite{LR}, and in the multi-mode case the problem simplifies only if we
consider maps $\Phi$ that do not add noise, since in this case the
constraint \eqref{posw} rules out almost all the equivalence classes. In the
general case, we will provide examples showing the other possibilities.

\subsection{No noise}

The main result of this section is the classification of the maps $\Phi$
that do not add noise ($\alpha=0$) and satisfy \eqref{positivity}:

\begin{thm}
\label{thm:nonoise} A map $\Phi$ with $\alpha=0$ satisfying
\eqref{positivity} can always be decomposed as a dilatation~(\ref{DILATATION}
), possibly composed with the transposition, followed by a
symplectic $S$ transformation: i.e.
\begin{equation}
K=S\;\kappa\mathbbm{1}_{2n}\qquad\text{or}\qquad K=S\;T\;\kappa\mathbbm{1}
_{2n}\;,
\end{equation}
with $\kappa\geq1$.
\end{thm}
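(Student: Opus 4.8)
The plan is to exploit the reformulation already obtained in \eqref{posw}: setting $\alpha=0$ there, the hypothesis \eqref{positivity} becomes the single inequality $\left|\mathbf{w}^\dag\Delta_K\mathbf{w}\right|\geq\left|\mathbf{w}^\dag\Delta\mathbf{w}\right|$ for all $\mathbf{w}\in\mathbb{C}^{2n}$, with $\Delta_K$ as in \eqref{DELTAK}. Since both $\Delta$ and $\Delta_K$ are \emph{real} and antisymmetric, I would first record that for $\mathbf{w}=\mathbf{w}_1+i\mathbf{w}_2$ with $\mathbf{w}_1,\mathbf{w}_2\in\mathbb{R}^{2n}$ one has $\mathbf{w}^\dag A\mathbf{w}=2i\,\mathbf{w}_1^T A\,\mathbf{w}_2$ for any real antisymmetric $A$ (the diagonal terms vanish by antisymmetry). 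This turns the complex constraint into the purely real statement
\begin{equation*}
\left|\mathbf{w}_1^T\Delta_K\mathbf{w}_2\right|\geq\left|\mathbf{w}_1^T\Delta\mathbf{w}_2\right|\qquad\forall\;\mathbf{w}_1,\mathbf{w}_2\in\mathbb{R}^{2n}\;,
\end{equation*}
i.e. the skew form defined by $\Delta_K$ dominates in absolute value the one defined by $\Delta$ on every pair of real vectors.

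The heart of the argument is to show that this domination is rigid: it forces $\Delta_K=c\,\Delta$ for a single real constant $c$. Fix $\mathbf{w}_1=\mathbf{u}\neq\mathbf{0}$. Whenever $\mathbf{u}^T\Delta_K\mathbf{v}=0$ the inequality gives $\mathbf{u}^T\Delta\mathbf{v}=0$, so the hyperplane $\{\mathbf{v}:\mathbf{v}\perp\Delta_K\mathbf{u}\}$ is contained in $\{\mathbf{v}:\mathbf{v}\perp\Delta\mathbf{u}\}$. Since $\Delta$ is invertible we have $\Delta\mathbf{u}\neq\mathbf{0}$, and $\Delta_K\mathbf{u}\neq\mathbf{0}$ as well (otherwise the inequality would force $\Delta\mathbf{u}=\mathbf{0}$); hence both sets are genuine hyperplanes and must coincide, giving $\Delta_K\mathbf{u}=c(\mathbf{u})\,\Delta\mathbf{u}$ for some nonzero scalar. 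Thus every vector is an eigenvector of $A:=\Delta^{-1}\Delta_K$, which forces $A=c\,\mathbbm{1}_{2n}$ (compare the eigenvalues on $\mathbf{u}$, $\mathbf{v}$ and $\mathbf{u}+\mathbf{v}$), hence $\Delta_K=c\,\Delta$ with $c$ real. Feeding this back into the displayed inequality yields $|c|\geq1$ (pick any $\mathbf{w}_1,\mathbf{w}_2$ with $\mathbf{w}_1^T\Delta\mathbf{w}_2\neq0$), and I set $\kappa:=\sqrt{|c|}\geq1$.

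It remains to translate $\Delta_K=c\,\Delta$, $|c|\geq1$, back into a statement about $K$. If $c>0$, then $K\Delta K^T=\kappa^2\Delta$ says exactly that $S:=K/\kappa$ is symplectic, so $K=S\,\kappa\mathbbm{1}_{2n}$, the dilatation \eqref{DILATATION} followed by $S$. If $c<0$, I use $T\Delta T=-\Delta$ (with $T$ the total transposition, $T^T=T$): a direct check then shows that $S:=K\,T/\kappa$ satisfies $S\Delta S^T=\Delta$, i.e. is symplectic, giving $K=S\,T\,\kappa\mathbbm{1}_{2n}$, a dilatation composed with the transposition and then $S$. I expect the only genuinely delicate point to be the rigidity step, namely the passage from the norm inequality to pointwise proportionality and then to a global scalar; this is precisely what lets us bypass the involved classification of antisymmetric matrices under symplectic congruence alluded to before the statement. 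The reductions to real forms and the final identification of $S$ are routine linear algebra.
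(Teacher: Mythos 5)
Your proof is correct, and while it shares the paper's overall skeleton---reduce \eqref{posw} with $\alpha=0$ to the real domination $\left|\mathbf{w}_1^T\Delta_K\mathbf{w}_2\right|\geq\left|\mathbf{w}_1^T\Delta\mathbf{w}_2\right|$, deduce $\Delta_K=c\,\Delta$ with $|c|\geq1$, and then read off the dilatation, the optional transposition and the symplectic factor---the central rigidity step is argued quite differently. The paper passes to a basis in which $\Delta_K$ takes its canonical antisymmetric block form, uses the fact that the entrywise domination holds in \emph{every} basis together with the nonsingularity of $\Delta$ to force $\Delta$ into a matching block form with parameters $\lambda_i$, and then applies a rotation mixing two blocks to show that all the $\lambda_i$ coincide. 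You instead fix $\mathbf{u}\neq\mathbf{0}$ and observe that the kernel of $\mathbf{v}\mapsto\mathbf{u}^T\Delta_K\mathbf{v}$ is contained in that of $\mathbf{v}\mapsto\mathbf{u}^T\Delta\mathbf{v}$; after checking $\Delta_K\mathbf{u}\neq\mathbf{0}$ (which the inequality plus invertibility of $\Delta$ correctly gives you), both kernels are hyperplanes, hence equal, hence $\Delta_K\mathbf{u}$ is parallel to $\Delta\mathbf{u}$ for every $\mathbf{u}$, and the standard ``every vector is an eigenvector of $\Delta^{-1}\Delta_K$'' argument produces a single global scalar $c$. Your route is more elementary and basis-free: it bypasses the canonical form of an antisymmetric matrix under congruence and the rotation trick, at the modest cost of the (routine) eigenvector lemma; the paper's route, on the other hand, makes visible exactly which equivalence classes of $\Delta_K$ survive the constraint, which is the language it then reuses for the counterexamples with $\alpha\neq0$. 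The preliminary identity $\mathbf{w}^\dag A\mathbf{w}=2i\,\mathbf{w}_1^TA\mathbf{w}_2$ for real antisymmetric $A$ is exactly what the paper records inside the proof of Lemma \ref{lem:inf}, and your final translation ($c>0$ gives $S=K/\kappa$ symplectic; $c<0$ gives $S=KT/\kappa$ symplectic via $T\Delta T^T=-\Delta$) coincides with the paper's, so no gaps remain.
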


\begin{proof}
With $\alpha=0$ and
\begin{equation}
\mathbf{w}=\mathbf{w}_1+i\mathbf{w}_2\;,\qquad\mathbf{w}_i\in\mathbb{R}^{2n}\;,
\end{equation}
\eqref{posw} becomes
\begin{equation}
\left|\mathbf{w}_1^T\Delta_K\mathbf{w}_2\right|\geq\left|\mathbf{w}_1^T\Delta\mathbf{w}_2\right|\;,\label{Deltamaj}
\end{equation}
i.e. all the matrix elements of $\Delta_K$ are in modulus bigger than the corresponding ones of $\Delta$ in \emph{any} basis. In particular, if some matrix element $\Delta_K^{ij}$ vanishes, also $\Delta^{ij}$ must vanish.
Let us choose a basis in which $\Delta_K$ has the canonical form
\begin{equation}
\Delta_K=\bigoplus_{i=1}^{\frac{r}{2}}\left(
                                         \begin{array}{cc}
                                            & 1 \\
                                           -1 &  \\
                                         \end{array}
                                       \right)\oplus0_{2n-r}\;,
\end{equation}
where
\begin{equation}
r\equiv\mathrm{rank}\,\Delta_K\;.
\end{equation}
For \eqref{Deltamaj}, in this basis $\Delta$ must be of the form
\begin{equation}
\Delta=\bigoplus_{i=1}^{\frac{r}{2}}\left(
                                         \begin{array}{cc}
                                            &  \lambda_i\\
                                           -\lambda_i &  \\
                                         \end{array}
                                       \right)\oplus0_{2n-r}\;,\qquad|\lambda_i|\leq1\;.
\end{equation}
Since $\Delta$ has full rank, there cannot be zeroes in its decomposition, so $r$ must be $2n$.

We will prove that all the eigenvalues $\lambda_i$ must be equal. Let us take two eigenvalues $\lambda$ and $\mu$, and consider the restriction of $\Delta$ and $\Delta_K$ to the subspace associated to them:
\begin{equation}
\Delta_K=\left(
           \begin{array}{cc|cc}
              & 1 &  &  \\
             -1 &  &  & \\
             \hline
              &  &  & 1 \\
              &  & -1 &  \\
           \end{array}
         \right)\qquad\Delta=\left(
           \begin{array}{cc|cc}
              & \lambda &  &  \\
             -\lambda &  &  & \\
             \hline
              &  &  & \mu \\
              &  & -\mu &  \\
           \end{array}
         \right)\;.
\end{equation}
If we change basis with the rotation matrix
\begin{eqnarray}
&R=\left(
    \begin{array}{cc}
      \cos\theta\;\mathbbm{1}_2 & -\sin\theta\;\mathbbm{1}_2 \\
      \sin\theta\;\mathbbm{1}_2 & \cos\theta\;\mathbbm{1}_2 \\
    \end{array}
  \right)\;,&\nonumber \\ &\Delta\mapsto R\Delta R^T\;,\qquad  \Delta_K\mapsto R\Delta_K R^T\;,&
\end{eqnarray}
$\Delta_K$ remains of the same form, while $\Delta$ acquires off-diagonal elements proportional to $\lambda-\mu$. Since for \eqref{Deltamaj} the off-diagonal elements of $\Delta$ must vanish also in the new basis, the only possibility is $\lambda=\mu$. Then all the $\lambda_i$ must be equal, and $\Delta_K$ must then be proportional to $\Delta$:
\begin{equation}
\Delta_K=\frac{1}{\lambda}\Delta\;,\qquad0<|\lambda|\leq1\;,\label{deltaprop}
\end{equation}
where we have put all the $\lambda_i$ equal to $\lambda\neq0$ (since $\Delta$ is nonsingular they cannot vanish).
Relation \eqref{deltaprop} means
\begin{equation}
K\Delta K^T=\frac{1}{\lambda}\Delta\;,
\end{equation}
i.e.
\begin{equation}
\left(\sqrt{|\lambda|}\;K\right)\;\Delta\;\left(\sqrt{|\lambda|}\;K\right)^T=\mathrm{sign}(\lambda)\Delta\;.
\end{equation}
If $0<\lambda\leq1$, we can write $K$ as a dilatation of
\begin{equation}
\kappa=\frac{1}{\sqrt{\lambda}}\;,
\end{equation}
composed with a symplectic transformation given by
\begin{equation}
S=\sqrt{\lambda}\;K\;,
\end{equation}
i.e.
\begin{equation}
K=S\;\kappa\mathbbm{1}_{2n}\;,\qquad S\Delta S^T=\Delta\;.
\end{equation}
If $-1\leq\lambda<0$, since the total transposition $T$ changes the sign of $\Delta$:
\begin{equation}
T\Delta T^T=-\Delta\;,
\end{equation}
we can write $K$ as a dilatation of
\begin{equation}
\kappa=\frac{1}{\sqrt{|\lambda|}}\;,
\end{equation}
composed with $T$ followed by a symplectic transformation:
\begin{equation}
K=S\;T\;\kappa\mathbbm{1}_{2n}\;,\qquad S\Delta S^T=\Delta\;.
\end{equation}
\end{proof}

\subsection{Examples with nontrivial decomposition}

If $\alpha\neq0$, a decomposition as simple as the one of theorem \ref
{thm:nonoise} does no more exist: here we will provide some examples in
which the canonical form of $\Delta_K$ is less trivial, and that do not fall
in any classification like the precedent one. Essentially, they are all
based on this observation:

\begin{prop}
If $\alpha$ is the covariance matrix of a quantum state, i.e. $\alpha\geq\pm
i\Delta$, the constraint \eqref{positivity} is satisfied by any $K$.
\end{prop}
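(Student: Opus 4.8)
The plan is to observe that the hypothesis $\alpha \geq \pm i\Delta$ already places $\alpha$ inside the cone of admissible output covariance matrices, so that adding the manifestly positive contribution $K\sigma K^{T}$ can only push us further into that cone, independently of $K$. Concretely, I would organize the argument in three short steps.

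First I would record that any $\sigma$ satisfying $\sigma \geq \pm i\Delta$ is in particular positive semidefinite: summing the two Hermitian inequalities $\sigma \geq i\Delta$ and $\sigma \geq -i\Delta$ yields $2\sigma \geq 0$, hence $\sigma \geq 0$. Since $K$ is a real matrix, conjugation by $K$ preserves positivity; indeed, for every $\mathbf{w}\in\mathbb{C}^{2n}$ one has $\mathbf{w}^{\dagger}K\sigma K^{T}\mathbf{w} = \left(K^{T}\mathbf{w}\right)^{\dagger}\sigma\left(K^{T}\mathbf{w}\right)\geq 0$, so that $K\sigma K^{T}\geq 0$ as a Hermitian matrix.

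Next I would rewrite the hypothesis $\alpha \geq \pm i\Delta$ as the pair of positive-semidefinite statements $\alpha \mp i\Delta \geq 0$. The target inequality \eqref{positivity}, namely $K\sigma K^{T}+\alpha \geq \pm i\Delta$, is equivalent to $\left(K\sigma K^{T}+\alpha\right)\mp i\Delta \geq 0$, which I would regroup as $K\sigma K^{T}+\left(\alpha\mp i\Delta\right)\geq 0$. Both summands are Hermitian positive semidefinite—the first by the previous step, the second by hypothesis—and the positive-semidefinite cone is closed under addition, so the sum is positive semidefinite. This holds for both choices of sign and for every $\sigma\geq\pm i\Delta$, which is exactly \eqref{positivity}.

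There is no genuine obstacle here: the whole content is that conjugation by the real matrix $K$ maps the positive cone into itself, together with the fact that $\alpha$ being a bona fide covariance matrix already absorbs the $\pm i\Delta$ shift. The only point I would bother to spell out is the reduction of $\sigma\geq\pm i\Delta$ to $\sigma\geq 0$; after that the conclusion is immediate and, as the statement asserts, completely independent of the choice of $K$.
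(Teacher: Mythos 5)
Your proof is correct and is precisely the argument the paper leaves implicit (the proposition is stated there as an immediate observation, without a written proof): since $\sigma\geq\pm i\Delta$ forces $\sigma\geq 0$, the real congruence $K\sigma K^{T}$ is positive semidefinite, and adding it to $\alpha\mp i\Delta\geq 0$ yields \eqref{positivity} for either sign. Nothing is missing.
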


Since for one mode the decomposition of theorem \ref{thm:1mode} holds, we
will provide examples with two-mode systems.

We will always consider bases in which
\begin{equation}
\Delta=\left(
\begin{array}{cc|cc}
& 1 &  &  \\
-1 &  &  &  \\ \hline
&  &  & 1 \\
&  & -1 &
\end{array}
\right)\;.
\end{equation}

\subsubsection{Partial transpose}

The first example is the partial transpose of the second subsystem, composed
with a dilatation of $\sqrt{\nu}$ and the addition of the covariance matrix
of the vacuum as noise:
\begin{equation}
K=\sqrt{\nu}\left(
\begin{array}{cc}
\mathbbm{1}_2 &  \\
& T_2
\end{array}
\right)\;,\qquad\nu>0\;,\qquad\alpha=\mathbbm{1}_4\;.
\end{equation}
In this case we have
\begin{equation}  \label{deltaKt}
\Delta_K=\left(
\begin{array}{cc|cc}
& \nu &  &  \\
-\nu &  &  &  \\ \hline
&  &  & -\nu \\
&  & \nu &
\end{array}
\right)\;,
\end{equation}
and $i(\Delta-\Delta_K)$ has eigenvalues
\begin{equation}
\pm(1+\nu)\qquad\pm(1-\nu)\;,
\end{equation}
so that one of them is $\left|1+|\nu|\right|>1$, and the complete positivity
requirement \eqref{CPTP}
\begin{equation}
\mathbbm{1}_4\geq\pm i(\Delta-\Delta_K)
\end{equation}
cannot be fulfilled by any $\nu\neq0$.

We will prove that this map cannot be written as a dilatation, possibly
composed with the transposition, followed by a completely positive map.
Indeed, suppose we can write $K$ as
\begin{equation}
K=K^{\prime}\;\lambda\mathbbm{1}_4\qquad\text{or}\qquad
K=K^{\prime}\;T_4\;\lambda\mathbbm{1}_4\;,\qquad\lambda\geq1\;.
\end{equation}
Then
\begin{equation}
\Delta_{K^{\prime}}=\pm\frac{1}{\lambda^2}\Delta_K
\end{equation}
is always of the form \eqref{deltaKt} with
\begin{equation}
\nu^{\prime}=\pm\frac{\nu}{\lambda^2}\;,
\end{equation}
and also the transformation with $K^{\prime}$ cannot be completely positive.

\subsubsection{\emph{Q} exchange}

As second example, we take for the added noise $\alpha$ still the covariance
matrix of the vacuum, and for the matrix $K$ the partial transposition of
the first mode composed with the exchange of $Q^1$ and $Q^2$ followed by a
dilatation of $\sqrt{\nu}$:
\begin{equation}
\alpha=\mathbbm{1}_4\geq\pm i\Delta\;,\quad K=\sqrt{\nu}\left(
\begin{array}{cc|cc}
&  & 1 &  \\
& -1 &  &  \\ \hline
1 &  &  &  \\
&  &  & 1
\end{array}
\right)\;,\quad\nu>0\;.
\end{equation}
With this choice,
\begin{equation}  \label{deltakex}
\Delta_K=\left(
\begin{array}{cc|cc}
&  &  & \nu \\
&  & \nu &  \\ \hline
& -\nu &  &  \\
-\nu &  &  &
\end{array}
\right)\;.
\end{equation}
The transformation is completely positive iff
\begin{equation}  \label{CPTPex}
\mathbbm{1}_4\geq\pm i(\Delta-\Delta_K)\;,
\end{equation}
and since the eigenvalues of $i(\Delta-\Delta_K)$ are
\begin{equation}
\pm\sqrt{1+\nu^2}\;,
\end{equation}
\eqref{CPTPex} is never fulfilled for any $\nu\neq0$.

As before, we will prove that this map cannot be written as a dilatation,
possibly composed with the transposition, followed by a completely positive
map. Indeed, suppose we can write $K$ as
\begin{equation}
K=K^{\prime}\;\lambda\mathbbm{1}_4\qquad\text{or}\qquad
K=K^{\prime}\;T_4\;\lambda\mathbbm{1}_4\;,\qquad\lambda\geq1\;.
\end{equation}
Then
\begin{equation}
\Delta_{K^{\prime}}=\pm\frac{1}{\lambda^2}\Delta_K
\end{equation}
is always of the form \eqref{deltakex} with
\begin{equation}
\nu^{\prime}=\pm\frac{\nu}{\lambda^2}\;,
\end{equation}
and also the transformation with $K^{\prime}$ cannot be completely positive.

\section{Conclusions}

\label{S:CON}

In this paper we have explored both at the classical and quantum level the
set of linear transformations sending the set
of Gaussian states into itself without imposing any further requirement,
such as positivity. We have proved that the action on the covariance matrix
and on the first moment must be linear, and we have found the form of the
action on the characteristic function. Focusing on the quantum case, for one
mode we have obtained a complete classification, stating that the only not
CP transformations in the set are actually the total transposition and the
dilatations (and their compositions with CP maps). The same result holds
also in the multi-mode scenario, but it needs the further hypothesis of
homogeneous action on the covariance matrix, since we have shown the
existence of non-homogeneous transformations belonging to the set but not
falling into our classification.

Despite the set $\mathfrak{F}$ of quantum states that are sent into positive operators by any dilatation is known to strictly contain the convex hull of Gaussian states $\mathfrak{C}$ even in the one-mode case\cite{WERNER}, the dilatations are then confirmed to be (at least in the single mode or in
the homogeneous action cases) the only transformation in the class
\eqref{channel} that can act as a probe for $\mathfrak{C}$.

\section{Acknowledgements}

GdP thanks F. Poloni for useful discussions. This work was supported in part
by the ERC through the Advanced Grant n. 321122 SouLMan.

\appendix
\section{Unboundedness of dilatations}\label{app}
\begin{thm}\label{unbounded}
For any $\lambda\neq\pm1$ the phase-space dilatation by $\lambda$ is not bounded in the Banach space $\mathfrak{T}$ of trace-class operators.
\begin{proof}
Fix $\lambda\neq\pm1$, and let $\Theta$ be the phase-space dilatation by $\lambda$.
Suppose $\Theta$ to be bounded, i.e.
\begin{equation}\label{bounded}
\left\|\Theta\left(\hat{X}\right)\right\|_1\leq\left\|\Theta\right\|\;\left\|\hat{X}\right\|_1\qquad\forall\;\hat{X}\in\mathfrak{T}\;.
\end{equation}
Let also
\begin{equation}
p_n^{(m)}:=\langle n|\Theta\left(|m\rangle\langle m|\right)|n\rangle\;.
\end{equation}
Eq. \eqref{bounded} implies
\begin{equation}
\sum_{n=0}^\infty\left|p_n^{(m)}\right|\leq\|\Theta\|\qquad\forall\;m\in\mathbb{N}\;.
\end{equation}
The moment generating function of $p^{(m)}$ is \cite{WERNER}
\begin{equation}\label{mgf}
g_m(q):=\sum_{n=0}^\infty p_n^{(m)}\;e^{-i\,n\,q}=\frac{1-\tau}{1-\tau\,e^{-i\,q}}\left(\frac{1-\tau\,e^{i\,q}}{e^{i\,q}-\tau}\right)^m\;,
\end{equation}
where $q\in\mathbb{R}$ and
\begin{equation}\label{tau2}
\tau:=\frac{\lambda^2-1}{\lambda^2+1}\;.
\end{equation}
Define
\begin{equation}\label{am}
a_m:=\frac{1-\tau}{\sqrt[3]{m\,\tau(1+\tau)}}\;.
\end{equation}
Let $\phi\in C_c^\infty(\mathbb{R})$ be an infinitely differentiable test function with compact support.
We must then have
\begin{equation}\label{cond}
\sum_{n=0}^\infty\phi\left(a_m\left(n-\lambda^2m\right)\right)\;p_n^{(m)}\leq\|\phi\|_\infty\;\|\Theta\|\;.
\end{equation}
Expressed in terms of the Fourier transform of $\phi$
\begin{equation}
\widetilde{\phi}(k)=\int_{-\infty}^\infty\phi(x)\;e^{i\,k\,x}\;dx\;,
\end{equation}
\eqref{cond} becomes
\begin{eqnarray}
\sum_{n=0}^\infty\left(\int_{-\infty}^\infty\widetilde{\phi}(k)\;e^{i\,\lambda^2\,m\,a_m\,k}\;e^{-i\,k\,a_m\,n}\;\frac{dk}{2\pi}\right)p_n^{(m)}\leq\nonumber\\
\leq\|\phi\|_\infty\;\|\Theta\|\;.\quad
\end{eqnarray}
Since the sum of the integrands is dominated by the integrable function
\begin{equation*}
\frac{\|\Theta\|}{2\pi}\left|\widetilde{\phi}(k)\right|\;,
\end{equation*}
we can bring the sum inside the integral, getting
\begin{equation}
\int_{-\infty}^\infty\widetilde{\phi}(k)\;g_m\left(a_mk\right)\;e^{i\,\lambda^2\,m\,a_m\,k}\;\frac{dk}{2\pi}\leq\|\phi\|_\infty\;\|\Theta\|\;.
\end{equation}
Since for any $k$
\begin{equation}\label{limitm}
\lim_{m\to\infty}\left(g_m\left(a_m\,k\right)\;e^{i\,\lambda^2\,m\,a_m\,k}\right)=e^{\frac{i\,k^3}{3}}
\end{equation}
(see subsection \ref{note}), by the dominated convergence theorem
\begin{eqnarray}
\lim_{m\to\infty}\int_{-\infty}^\infty\widetilde{\phi}(k)\;g_m\left(a_mk\right)\;e^{i\,\lambda^2\,m\,a_m\,k}\;\frac{dk}{2\pi} =\nonumber\\
=\int_{-\infty}^\infty\widetilde{\phi}(k)\;e^\frac{i\,k^3}{3}\;\frac{dk}{2\pi}=\int_{-\infty}^\infty\phi(x)\;\mathrm{Ai}(x)\;dx\;,
\end{eqnarray}
where $\mathrm{Ai}(x)$ is the Airy function.
Now we get
\begin{equation}\label{limit}
\int_{-\infty}^\infty \mathrm{Ai}(x)\;\phi(x)\;dx\leq \|\Theta\|\;\|\phi\|_\infty\qquad\forall\;\phi\in C_c^\infty(\mathbb{R})\;.
\end{equation}
Since the Airy function is continuous and the set of its zeroes has no accumulation points (except $-\infty$), there exists a sequence of test functions $\phi_r\in C_c^\infty(\mathbb{R})$, $r\in\mathbb{N}$ with $\|\phi_r\|_\infty=1$ approximating $\mathrm{sign}\left(\mathrm{Ai}(x)\right)$, i.e. such that
\begin{equation}
\lim_{r\to\infty}\int_{-\infty}^\infty \mathrm{Ai}(x)\;\phi_r(x)\;dx=\int_{-\infty}^\infty\left|\mathrm{Ai}(x)\right|dx=\infty\;,
\end{equation}
implying
\begin{equation}
\|\Theta\|=\infty\;.
\end{equation}
\end{proof}
\end{thm}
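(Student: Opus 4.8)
The plan is to argue by contradiction: assume the dilatation $\Theta$ is bounded on $\mathfrak{T}$ and manufacture from this hypothesis a uniform bound that is ultimately violated. The natural test operators are the Fock projectors $|m\rangle\langle m|$, which have unit trace norm, so boundedness forces $\|\Theta(|m\rangle\langle m|)\|_1 \leq \|\Theta\|$ for every $m$. Since for any trace-class $A$ one has $\sum_n |\langle n|A|n\rangle| = |\mathrm{Tr}(DA)| \leq \|A\|_1$ with $D$ the diagonal unitary of matching phases, the trace norm dominates the $\ell^1$-norm of the diagonal; writing $p_n^{(m)} := \langle n|\Theta(|m\rangle\langle m|)|n\rangle$ I would first extract the uniform estimate $\sum_{n=0}^\infty |p_n^{(m)}| \leq \|\Theta\|$. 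Because $\Theta$ is Gaussian-to-Gaussian, the generating function $g_m(q)=\sum_n p_n^{(m)} e^{-inq}$ of the sequence $p^{(m)}$ is available in closed form \cite{WERNER}, and the distribution is concentrated around $\lambda^2 m$.

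The second step is to probe $p^{(m)}$ with a smooth compactly supported test function rescaled about this centre. The cubic (Airy-type) edge behaviour dictates the scale $a_m \sim m^{-1/3}$, and evaluating the uniform $\ell^1$ bound against $\phi\big(a_m(n-\lambda^2 m)\big)$ yields $\sum_n \phi\big(a_m(n-\lambda^2 m)\big)\,p_n^{(m)} \leq \|\phi\|_\infty\,\|\Theta\|$. Passing to the Fourier transform $\widetilde\phi$ and interchanging summation and integration — legitimate since the integrand is controlled by $|\widetilde\phi(k)|\,\|\Theta\|$, integrable because $\widetilde\phi$ is Schwartz — recasts the left-hand side as an integral of $\widetilde\phi(k)\,g_m(a_m k)\,e^{i\lambda^2 m a_m k}$.

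The crux is the pointwise scaling limit $g_m(a_m k)\,e^{i\lambda^2 m a_m k}\to e^{ik^3/3}$ as $m\to\infty$, and I expect this to be the main obstacle. It requires expanding $\log g_m(a_m k)$ and checking that, once the centring phase cancels the linear term and the diffusive quadratic term is suppressed by the $m^{-1/3}$ scaling, the surviving contribution is exactly the cubic $ik^3/3$. Granting this, a second application of dominated convergence — the moduli $|g_m(a_m k)|\leq \|\Theta\|$ being again dominated by $\|\Theta\|\,|\widetilde\phi(k)|$ — gives $\lim_m \int \widetilde\phi(k)\,g_m(a_m k)\,e^{i\lambda^2 m a_m k}\,\frac{dk}{2\pi} = \int \widetilde\phi(k)\,e^{ik^3/3}\,\frac{dk}{2\pi} = \int \phi(x)\,\mathrm{Ai}(x)\,dx$, the last equality being the Fourier representation of the Airy function. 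The boundedness hypothesis therefore implies $\int \mathrm{Ai}(x)\,\phi(x)\,dx \leq \|\Theta\|\,\|\phi\|_\infty$ for every $\phi \in C_c^\infty(\mathbb{R})$.

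Finally I would exploit the fact that, although $\mathrm{Ai}$ decays superexponentially at $+\infty$, its oscillations at $-\infty$ have amplitude $\sim|x|^{-1/4}$, so that $\int_{-\infty}^\infty |\mathrm{Ai}(x)|\,dx = \infty$. Choosing a sequence $\phi_r$ with $\|\phi_r\|_\infty = 1$ approximating $\mathrm{sign}(\mathrm{Ai})$ — possible since $\mathrm{Ai}$ is continuous and its zeros have no finite accumulation point — drives the left-hand side to $+\infty$ and forces $\|\Theta\| = \infty$, contradicting boundedness. The delicate points are thus the cubic scaling limit and the verification that the diagonal $\ell^1$ bound, combined with the non-integrability of $\mathrm{Ai}$, genuinely closes the contradiction.
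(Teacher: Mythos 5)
Your proposal reproduces the paper's argument essentially step for step: the same uniform $\ell^1$ bound on the diagonal $p_n^{(m)}$, the same $m^{-1/3}$ rescaling against a test function, the same passage to Fourier space with dominated convergence, the same Airy scaling limit, and the same contradiction from $\int|\mathrm{Ai}(x)|\,dx=\infty$. The one point to correct is in the step you rightly flag as the crux: the quadratic term is \emph{not} ``suppressed by the $m^{-1/3}$ scaling'' --- a nonzero quadratic coefficient $c$ in the per-factor expansion of $\log\bigl(\tfrac{1-\tau e^{iq}}{e^{iq}-\tau}\,e^{i\lambda^{2}q}\bigr)$ would contribute $m\,c\,a_m^{2}k^{2}\sim m^{1/3}$ and destroy the limit; what actually saves the argument (and what the paper's Taylor expansion in its appendix verifies) is that both the linear and the quadratic terms vanish identically, so that the expansion starts at $\tfrac{i q^{3}\tau(1+\tau)}{3(1-\tau)^{3}}$ and the cubic survives alone.
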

\subsection{Computation of the limit in (A11)}\label{note}
Here we compute explicitly the limit in \eqref{limitm}.
It is better to rephrase it in terms of
\begin{equation}
q:=a_m\,k\;,\qquad q\to0
\end{equation}
(remember that $a_m\sim1/\sqrt[3]{m}$).
Putting together \eqref{limitm}, \eqref{mgf}, \eqref{tau2} and \eqref{am}, we have to compute
\begin{equation} \label{ffd}
\lim_{q\to0}\left(\frac{1-\tau}{1-\tau\,e^{-i\,q}}\left(\frac{1-\tau\,e^{i\,q}}{e^{i\,q}-\tau}\;e^{i\,\frac{1+\tau}{1-\tau}\,q}\right)^\frac{k^3(1-\tau)^3}{q^3\,\tau(1+\tau)}\right)\overset{?}{=}e^\frac{i\,k^3}{3}\;.
\end{equation}
The first term on the left-hand-side tends to one. The second term on the left-hand-side instead can be treated via Taylor expansion, i.e.
\begin{equation}
\frac{1-\tau\,e^{i\,q}}{e^{i\,q}-\tau}\;e^{i\,\frac{1+\tau}{1-\tau}\,q}=1+\frac{i\,q^3\,\tau(1+\tau)}{3(1-\tau)^3}+\mathcal{O}\left(q^5\right)
\end{equation}
for $q\to0$.
This gives
\begin{eqnarray}
&&\lim_{q\to0}\left(\frac{1-\tau\,e^{i\,q}}{e^{i\,q}-\tau}\;e^{i\,\frac{1+\tau}{1-\tau}\,q}\right)^\frac{k^3(1-\tau)^3}{q^3\,\tau(1+\tau)}=\nonumber\\
&&=\lim_{q\to0}\left(1+\frac{i\,q^3\,\tau(1+\tau)}{3(1-\tau)^3}+\mathcal{O}\left(q^5\right)\right)^\frac{k^3(1-\tau)^3}{q^3\,\tau(1+\tau)}=\nonumber\\
&&=e^\frac{i\,k^3}{3}\;,
\end{eqnarray}
which proves the identity of (\ref{ffd}).

\end{document}